\documentclass[onefignum,onetabnum]{siamart251216}

\usepackage{lipsum}
\usepackage{amsfonts}
\usepackage{graphicx}
\usepackage{epstopdf}
\usepackage{algorithmic}
\usepackage{subcaption}
\usepackage{cleveref}
\ifpdf
  \DeclareGraphicsExtensions{.eps,.pdf,.png,.jpg}
\else
  \DeclareGraphicsExtensions{.eps}
\fi

\usepackage{enumitem}
\setlist[enumerate]{leftmargin=.5in}
\setlist[itemize]{leftmargin=.5in}

\newsiamremark{remark}{Remark}
\newsiamremark{hypothesis}{Hypothesis}
\crefname{hypothesis}{Hypothesis}{Hypotheses}
\newsiamthm{claim}{Claim}
\newsiamremark{fact}{Fact}
\crefname{fact}{Fact}{Facts}
\Crefname{ALC@unique}{Line}{Lines}

\newcommand{\bszero}{\boldsymbol{0}}
\newcommand{\bsh}{\boldsymbol{h}}

\newcommand{\bsk}{\boldsymbol{k}}

\newcommand{\bsv}{\boldsymbol{v}}
\newcommand{\bsx}{\boldsymbol{x}}
\newcommand{\bsy}{\boldsymbol{y}}
\newcommand{\bsz}{\boldsymbol{z}}
\newcommand{\bsalpha}{\boldsymbol{\alpha}}

\newcommand{\bsDelta}{\boldsymbol{\Delta}}

\newcommand{\NN}{\mathbb{N}}

\newcommand{\QQ}{\mathbb{Q}}
\newcommand{\RR}{\mathbb{R}}

\newcommand{\ZZ}{\mathbb{Z}}

\newcommand{\Ccal}{\mathcal{C}}

\newcommand{\Scal}{\mathcal{S}}

\DeclareMathOperator{\Span}{span}
\DeclareMathOperator{\diam}{diam}
\DeclareMathOperator{\vol}{vol}

\allowdisplaybreaks

\headers{Space-filling lattice designs}{N. Sakai and T. Goda}

\title{Space-filling lattice designs for\\ computer experiments\thanks{Submitted to the editors DATE.}}

\author{Naoki Sakai\thanks{Department of Energy Science and Engineering, Stanford University, 537 Panama Mall, Stanford, CA 94305, USA 
  (\email{naoki65@stanford.edu}).}
\and Takashi Goda\thanks{Graduate School of Engineering, The University of Tokyo, 7-3-1 Hongo, Bunkyo-ku, Tokyo 113-8656, Japan 
  (\email{goda@frcer.t.u-tokyo.ac.jp}).}
}

\usepackage{amsopn}

\ifpdf
\hypersetup{
  pdftitle={Space-filling lattice designs for computer experiments},
  pdfauthor={N. Sakai and T. Goda}
}
\fi

\begin{document}

\maketitle

\begin{abstract}
    This paper investigates the construction of space-filling designs for computer experiments. The space-filling property is characterized by the covering and separation radii of a design, which are integrated through the unified criterion of quasi-uniformity. We focus on a special class of designs, known as quasi-Monte Carlo (QMC) lattice point sets, and propose two construction algorithms. The first algorithm generates rank-1 lattice point sets as an approximation of quasi-uniform Kronecker sequences, where the generating vector is determined explicitly. As a byproduct of our analysis, we prove that this explicit point set achieves an isotropic discrepancy of $O(N^{-1/d})$. The second algorithm utilizes Korobov lattice point sets, employing the Lenstra--Lenstra--Lov\'{a}sz (LLL) basis reduction algorithm to identify the generating vector that ensures quasi-uniformity. Numerical experiments are provided to validate our theoretical claims regarding quasi-uniformity. Furthermore, we conduct empirical comparisons between various QMC point sets in the context of Gaussian process regression, showcasing the efficacy of the proposed designs for computer experiments.
\end{abstract}

\begin{keywords}
designs of computer experiments, quasi-Monte Carlo, quasi-uniformity, geometry of numbers, successive minima, lattice basis reduction
\end{keywords}

\begin{MSCcodes}
Primary 62K05; Secondary 11H31, 52C17, 65D05, 65D12
\end{MSCcodes}

\sloppy

\section{Introduction}\label{sec:introduction}

In the field of computer experiments, maximin and minimax distance-based designs have gained prominence due to their superior space-filling properties \cite{FLS06,HRDH11,JMY90,J16,P17,SWN03}. Conceptually, maximin designs prioritize the separation between sampling points to ensure that they are well-dispersed across the domain, while minimax designs aim to cover the target space as efficiently as possible by minimizing the largest unexplored regions. These two geometric objectives, i.e., separation and coverage, are naturally unified by the concept of quasi-uniformity, which aims to attain both properties simultaneously and has become one of the central criteria for assessing the quality of experimental designs recently \cite{PM12,PZ23}.

The utility of quasi-uniform designs extends across various numerical frameworks, including radial basis function approximation, kernel interpolation, and Gaussian process regression \cite{Sch95,SW06,T20,W05,WSH21,WBG21}. To appreciate the necessity of quasi-uniformity, consider kernel interpolation with standard Sobolev kernels as a primary example. In this context, the geometric properties of the point set dictate both the approximation accuracy and the numerical stability. When the target function resides within the associated native reproducing kernel Hilbert space, the approximation error is typically bounded by a power of the covering radius. However, minimizing the covering radius is not the sole objective. The separation radius is equally critical, as it governs the condition number of the kernel matrix (Gram matrix); a shrinking separation radius leads to severe ill-conditioning, compromising numerical stability \cite{Sch95}. Furthermore, in scenarios where the target function possesses lower regularity than the kernel implies, i.e., when approximating a ``rough'' function that lies outside the native space, theoretical error bounds often explicitly depend on the mesh ratio (the ratio of the covering radius to the separation radius), rather than solely on the covering radius \cite{SW06,T20,WBG21}. Consequently, constructing quasi-uniform designs for computer experiments is critical for achieving a balance between high-fidelity convergence and numerical stability. Furthermore, in the context of Bayesian optimization, which frequently employs Gaussian process regression as a surrogate model, the use of quasi-uniform designs in the initial sampling stage has empirically been shown to accelerate the convergence of the optimization process \cite{KIWINTSG25}.

Quasi-Monte Carlo (QMC) point sets and sequences, also known as low-discrepancy point sets and sequences, are widely recognized for their ability to achieve faster convergence rates in numerical integration over the unit hypercube $[0,1]^d$ than random sampling \cite{DKS13,KN74,N92}. Given their excellent equi-distribution properties, they have also been considered prime candidates for space-filling quasi-uniform designs, see, for instance, \cite[Section~3.1]{WBG21}. However, the specific geometric property of quasi-uniformity in the QMC context has remained relatively under-explored until very recently. Recent studies have begun to fill this gap, revealing highly non-trivial results: many widely used QMC sequences---including the two-dimensional Sobol', Faure, and Halton sequences---fail to satisfy the requirements for quasi-uniformity (see \cite{G24a}, \cite{DGS25}, and \cite{GHS25}, respectively). More precisely, the separation radius of these sequences decays at a faster rate than $N^{-1/d}$, violating the conditions necessary for quasi-uniformity. In contrast, a recent work \cite{DGLPS25}, together with a prior work \cite{G24b}, proves that lattice point sets and Kronecker sequences can achieve quasi-uniformity, provided that their underlying parameters are appropriately selected.

While \cite{DGLPS25} provided an explicit construction for quasi-uniform Kronecker sequences, the results for lattice point sets remained purely existential; neither an explicit construction nor a constructive algorithm was previously known. Addressing this gap is the primary objective of the present work. In this paper, we propose two distinct methodologies for constructing space-filling, quasi-uniform lattice designs for computer experiments as follows:

\begin{enumerate}
    \item \emph{Explicit rank-1 lattice construction}: We provide an explicit rank-1 lattice point set construction as a discrete approximation of the quasi-uniform Kronecker sequences introduced in \cite{DGLPS25}. A characteristic of this method is that the number of sampling points $N$ is not arbitrarily chosen but is determined by the Diophantine properties of the parameters defining the underlying Kronecker sequence. Despite this constraint, the method provides the parameters, i.e., the generating vectors, for rank-1 lattice point sets explicitly. Furthermore, this result yields rank-1 lattice point sets with low isotropic discrepancy, thereby providing a partial solution to an open problem regarding the explicit construction of such sets \cite[Corollary 5.27]{DKP22}.
    \item \emph{Algorithmic Korobov lattice construction}: Focusing on a class of Korobov lattice point sets, we develop a constructive algorithm that leverages the Lenstra–Lenstra–Lovász (LLL) lattice basis reduction algorithm \cite{LLL82} to efficiently identify the generating vector that ensures quasi-uniformity. This algorithm is applicable for any prime number $N$, and its computational cost is highly efficient, growing at most polynomially in the dimension $d$ and only quasi-linearly in $N$.
\end{enumerate}

The rest of this paper is organized as follows. In \cref{sec:preliminaries}, we introduce the necessary notation and formal definitions, including the criteria for quasi-uniformity and the construction of QMC lattice point sets. We also introduce the concept of successive minima from the geometry of numbers \cite{C97,S89}, and establish their connection to the quasi-uniformity of lattice designs. In \cref{sec:construction}, we detail our two proposals for constructing space-filling lattice designs: explicit rank-1 lattice construction and algorithmic Korobov lattice construction. We provide theoretical results demonstrating that both approaches yield quasi-uniform designs. Finally, \cref{sec:experiments} presents numerical results and performance evaluations in the context of Gaussian process regression, followed by concluding remarks in \cref{sec:conclusion}.

\section{Preliminaries}\label{sec:preliminaries}

Throughout this paper, let $\ZZ$ be the set of integers, and $\NN$ be the set of positive integers. We also use the standard asymptotic notation: for two non-negative functions $f,g$, we write $f(N)=O(g(N))$ (resp. $f(N)=\Omega(g(N))$) if $f\le Cg$ (resp. $f\ge Cg$) for some constant $C>0$ and sufficiently large $N$, and we write $f(N)=\Theta(g(N))$ if both conditions hold.

We focus on the $d$-dimensional unit cube $[0,1]^d$ as our target domain. This choice is without loss of generality: while practical applications often involve general rectangular domains, any such domain can be mapped from $[0,1]^d$ via a linear transformation. Since such transformations only alter the covering and separation radii by a constant factor determined by the domain's side lengths, the asymptotic properties and the quasi-uniformity of the designs are preserved.

\subsection{Quasi-uniform designs}\label{subsec:qu_designs}

Let $P_N \subset [0,1]^d$ be a set of $N$ points. To evaluate how well these points fill the space, we consider the covering radius and the separation radius. The \emph{covering radius}, or fill distance, of $P_N$ with respect to $[0,1]^d$ is defined as
\[ h(P_N) = \sup_{\bsx \in [0,1]^d} \min_{\bsy \in P_N} \|\bsx - \bsy\|_2, \]
which measures the largest distance from any point in the domain to the nearest point in $P_N$. The \emph{separation radius} of $P_N$ is defined as
\[ q(P_N) = \frac{1}{2}\min_{\substack{\bsx,\bsy\in P_N\\ \bsx\neq \bsy}} \|\bsx - \bsy\|_2, \]
representing half the minimum distance between any two distinct points in the set. 

These two quantities admit the following geometric interpretations. If we consider Euclidean balls of radius $r$ centered at each point in $P_N$, the covering radius $h(P_N)$ is the minimum radius such that the union of the closed balls covers the entire domain $[0,1]^d$. Conversely, the separation radius $q(P_N)$ is the maximum radius such that the interiors of the balls are pairwise disjoint. From these interpretations, it follows that for any set of $N$ points $P_N$, there exist positive constants $C_{1,d},C_{2,d}$, depending only on the dimension $d$, such that
\begin{align}\label{eq:bounds_h_and_q}
    h(P_N)\ge C_{1,d} N^{-1/d} \quad \text{and}\quad q(P_N)\le C_{2,d} N^{-1/d},
\end{align}
respectively (see, for instance, \cite{DGLPS25,DGS25,PZ23}).

Let $\Scal\subset [0,1]^d$ be an infinite sequence of points, and let $P_N$ denote the set consisting of the first $N$ points of $\Scal$. We formally define the quasi-uniformity of the sequence $\Scal$ as follows.

\begin{definition}
    An infinite sequence of points $\Scal$ is \emph{quasi-uniform} over $[0,1]^d$ if there exists a positive constant $C_d$, independent of $N$, such that
    \begin{align}\label{eq:quasi-uniform_condition}
        \rho(P_N) := \frac{h(P_N)}{q(P_N)}\le C_d,
    \end{align}
    for any $N\ge 2$. The quantity $\rho(P_N)$ is referred to as the \emph{mesh ratio} of $P_N$ with respect to $[0,1]^d$.
\end{definition}

As a consequence of the bounds \eqref{eq:bounds_h_and_q}, for a quasi-uniform sequence, both the covering and separation radii must satisfy $h(P_N)=\Theta(N^{-1/d})$ and $q(P_N)=\Theta(N^{-1/d})$. This implies that a quasi-uniform sequence fills the space at the optimal rate while maintaining a minimum separation between points, also at the optimal rate.

We also define quasi-uniformity for a sequence of (not necessarily extensible) point sets. It was shown in \cite{DGLPS25,DGS25} that if the condition \eqref{eq:quasi-uniform_condition} is satisfied along a subsequence of sample sizes $N_1<N_2<\cdots$ such that $N_{i+1}\le cN_i$ for some constant $c>1$, the underlying space-filling property is essentially preserved for all $N$. Motivated by this observation, we adopt the following definition:

\begin{definition}\label{def:quasi-uniform_point_set}
    Let $(P_{N_i})_{i\in \NN}$ be a set of point sets such that $P_{N_i}\subset [0,1]^d$, where $N_i=|P_{N_i}|$ denotes the number of points in the $i$-th set, and $N_i<N_{i+1}\le cN_i$ for a constant $c>1$. If there exists a positive constant $C_d$, independent of $i$, such that
    \[  \rho(P_{N_i})\le C_d  \]
    holds for all $i\in \NN$, we say that $(P_{N_i})_{i\in \NN}$ is a quasi-uniform family of point sets.
\end{definition}

\subsection{Lattices}\label{subsec:lattice}

A \emph{lattice} $\Lambda \subset \RR^d$ is a discrete additive subgroup of $\RR^d$. A full rank lattice is a lattice $\Lambda$ that can be represented as the set of all integer linear combinations of $d$ linearly independent vectors:
\[ \Lambda = T\ZZ^d = \left\{ T\bsk \mid \bsk \in \ZZ^d \right\}, \]
where $T \in \RR^{d \times d}$ is an invertible matrix called the \emph{generating matrix} (or \emph{basis matrix}). The columns of $T$ form a basis of the lattice. We define the covering and separation radii of $\Lambda$ with respect to the Euclidean space $\RR^d$ by
\[ h(\Lambda) := \sup_{\bsx \in \RR^d} \min_{\bsy \in \Lambda} \|\bsx - \bsy\|_2, \quad \text{and} \quad q(\Lambda) := \frac{1}{2}\min_{\substack{\bsx,\bsy\in \Lambda\\ \bsx\neq \bsy}} \|\bsx - \bsy\|_2 = \frac{1}{2} \min_{\bsx \in \Lambda \setminus \{\bszero\}} \|\bsx\|_2, \]
respectively. Likewise, the mesh ratio is defined by $\rho(\Lambda) := h(\Lambda) / q(\Lambda)$. The separation radius $q(\Lambda)$ is equivalent to half the length of the shortest non-zero vector in the lattice.

For a given number of points $N$ and a generating vector $\bsz = (z_1, \dots, z_d) \in \{1, \dots, N-1\}^d$, let us consider the set 
\[ P_{N,\bsz} := \left\{ \bsx_n = \left( \left\{ \frac{nz_1}{N} \right\}, \dots, \left\{ \frac{nz_d}{N} \right\} \right) \mid n=0, 1, \dots, N-1 \right\} \subset [0,1]^d, \]
which is called a \emph{rank-1 lattice point set}, where $\{x\} = x - \lfloor x \rfloor$ denotes the fractional part of $x$. This point set has been extensively studied in the context of QMC integration \cite{DKP22, N92, SJ94}. Throughout this paper, we assume $\gcd(N, z_1, \dots, z_d) = 1$, which ensures that $P_{N,\bsz}$ consists of $N$ distinct points.

Such a point set can be viewed as the intersection $P_{N,\bsz} = \Lambda \cap [0,1)^d$ with an infinite lattice $\Lambda = \frac{1}{N}\bsz \ZZ + \ZZ^d$. Since $\ZZ^d\subseteq \Lambda$ and the unit cube $[0,1)^d$ contains exactly $N$ distinct points of $\Lambda$ under the condition $\gcd(N, z_1, \dots, z_d) = 1$, the volume of the fundamental domain of $\Lambda$ is $1/N$. Therefore, there always exists a generating matrix $T \in \RR^{d \times d}$ for $\Lambda$ such that $\det(T) = 1/N$. For instance, when $\gcd(N, z_1) = 1$, a generating matrix $T$ can be explicitly constructed as
\begin{align}\label{eq:generating_matrix}
T = \begin{pmatrix} 1/N & 0 & \dots & 0 \\ z_1^{-1}z_2/N & 1 & \dots & 0 \\ \vdots & \vdots & \ddots & \vdots \\ z_1^{-1}z_d/N & 0 & \dots & 1 \end{pmatrix},
\end{align}
where $z_1^{-1} \in \ZZ_N$ is the multiplicative inverse of $z_1$ modulo $N$. More generally, for any $\bsz$ with at least one component coprime to $N$, we can achieve $\gcd(N, z_1) = 1$ by reordering coordinates, allowing us to use this explicit form. A well-known special case is the \emph{Korobov lattice point set}, where 
\[ \bsz = (1, a, a^2, \dots, a^{d-1}) \pmod N\]
for some integer $a\in \ZZ_N$, which always satisfies $\gcd(N, z_1) = 1$.

\begin{remark}
It should be noted that the term ``rank-1'' refers to the fact that a single vector $\bsz$ generates $P_{N,\bsz}$ as a finite abelian group (specifically, $P_{N,\bsz} \cong \ZZ_N$ when $\gcd(N, z_1, \dots, z_d)=1$). This is distinct from the rank of the underlying infinite lattice $\Lambda \subset \RR^d$, which is $d$ (full rank) since its generating matrix $T$ is invertible. 
\end{remark}

The dual lattice of $\Lambda$, which plays an important role in the subsequent analysis, is defined by
\[ \Lambda^{\perp} := \left\{ \bsy \in \RR^d \mid \bsx \cdot \bsy \in \ZZ \quad \text{for all } \bsx \in \Lambda \right\}. \]
Let $T$ be a generating matrix for $\Lambda$. Then, it is easily seen that the dual lattice $\Lambda^{\perp}$ itself is also a lattice and that its generating matrix is $(T^{-1})^{\top}$. In particular, if $T$ is given by \eqref{eq:generating_matrix}, the generating matrix for $\Lambda^{\perp}$ is
\begin{align}\label{eq:dual_matrix}
(T^{-1})^{\top} = \begin{pmatrix} N & -z_1^{-1}z_2 & \dots & -z_1^{-1}z_d \\ 0 & 1 & \dots & 0 \\ \vdots & \vdots & \ddots & \vdots \\ 0 & 0 & \dots & 1 \end{pmatrix} \pmod{N},
\end{align}
for which $\det((T^{-1})^{\top}) = N$ holds.

\subsection{Successive minima and quasi-uniformity}\label{subsec:successive_minima}

To further characterize the geometry of a lattice $\Lambda$, we introduce the concept of \emph{successive minima} from the geometry of numbers \cite{C97, S89}. For $j=1,\ldots,d$, the $j$-th successive minimum of $\Lambda$ with respect to the Euclidean ball $B_d:=\left\{ \bsx\in \RR^d \mid \|\bsx\|_2\le 1 \right\}$ is defined as
\[ \lambda_j(\Lambda) := \inf\left\{ r>0 \mid \dim(\Span(\Lambda\cap rB_d) )\ge j \right\}. \]
By definition, $\lambda_j(\Lambda)$ represents the radius of the smallest ball centered at the origin that contains $j$ linearly independent lattice vectors.

The successive minima are closely related to the covering and separation radii of the lattice. From the definition of $q(\Lambda)$ in \cref{subsec:lattice}, it immediately follows that
\[ q(\Lambda) = \frac{\lambda_1(\Lambda)}{2}. \]
Regarding the covering radius of the lattice, it is known that
\[ \frac{1}{2}\lambda_d(\Lambda) \le h(\Lambda) \le \frac{\sqrt{d}}{2}\lambda_d(\Lambda), \]
see, for instance, \cite[Theorem~7.9]{MG02}. Moreover, in \cite{B93}, Banaszczyk proved the following transference theorems:
\begin{align}\label{eq:transference}
    \frac{1}{2} \le h(\Lambda) \lambda_1(\Lambda^{\perp}) \le \frac{d}{2} \quad \text{and} \quad 1 \le \lambda_j(\Lambda)\lambda_{d-j+1}(\Lambda^{\perp}) \le d, 
\end{align}
for all $j=1,\ldots,d$. These bounds imply that the mesh ratio $\rho(\Lambda)$ is fundamentally characterized by the balance of successive minima:
\begin{align}\label{eq:bound_mesh_raio}
    \max\left( \frac{\lambda_d(\Lambda)}{\lambda_1(\Lambda)}, d^{-1} \frac{\lambda_d(\Lambda^{\perp})}{\lambda_1(\Lambda^{\perp})}\right)\le \rho(\Lambda) \le \min\left( \sqrt{d}\frac{\lambda_d(\Lambda)}{\lambda_1(\Lambda)}, d\frac{\lambda_d(\Lambda^{\perp})}{\lambda_1(\Lambda^{\perp})}\right).
\end{align}
This sandwiching of $\rho(\Lambda)$ shows that the mesh ratio is small if and only if the successive minima of the lattice (or its dual lattice) are well-balanced.

Applying Minkowski's second theorem, we obtain the following result:

\begin{lemma}\label{lem:rho_inf_bound}
    Let $\Lambda \subset \RR^d$ be a lattice with a generating matrix $T$, and $\Lambda^{\perp}$ be its dual lattice. Then,
    \[ \rho(\Lambda) \le C_d \min\left( \sqrt{d}\frac{|\det(T)|}{(\lambda_1(\Lambda))^d}, d\frac{|\det(T^{-1})|}{(\lambda_1(\Lambda^{\perp}))^d}\right), \]
    with $C_d=2^d\Gamma(1+d/2)/\pi^{d/2}$ where $\Gamma$ denotes the gamma function.
\end{lemma}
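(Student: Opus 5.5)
Starting from the upper bound in \eqref{eq:bound_mesh_raio}, namely $\rho(\Lambda)\le \sqrt{d}\,\lambda_d(\Lambda)/\lambda_1(\Lambda)$ and $\rho(\Lambda)\le d\,\lambda_d(\Lambda^{\perp})/\lambda_1(\Lambda^{\perp})$, it suffices to bound $\lambda_d/\lambda_1$ for $\Lambda$ and for $\Lambda^{\perp}$ in terms of the covolume and $\lambda_1$. The plan is to estimate $\lambda_d$ from above via Minkowski's second theorem, applied to each lattice separately. Taking the minimum of the two resulting bounds then gives the statement.

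Minkowski's second theorem for the Euclidean ball states
\[ \lambda_1(\Lambda)\lambda_2(\Lambda)\cdots\lambda_d(\Lambda)\,\vol(B_d)\le 2^d|\det(T)|, \qquad \vol(B_d)=\frac{\pi^{d/2}}{\Gamma(1+d/2)}. \]
Since $\lambda_1\le\lambda_2\le\cdots\le\lambda_{d-1}$, we have $\lambda_1(\Lambda)^{d-1}\lambda_d(\Lambda)\le \prod_{j}\lambda_j(\Lambda)$. Hence
\[ \lambda_d(\Lambda)\le \frac{2^d\Gamma(1+d/2)}{\pi^{d/2}}\,\frac{|\det(T)|}{\lambda_1(\Lambda)^{d-1}} = C_d\,\frac{|\det(T)|}{\lambda_1(\Lambda)^{d-1}}. \]
Dividing by $\lambda_1(\Lambda)$ yields $\lambda_d(\Lambda)/\lambda_1(\Lambda)\le C_d|\det(T)|/\lambda_1(\Lambda)^d$. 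Combined with the factor $\sqrt{d}$ in \eqref{eq:bound_mesh_raio}, this gives the first term of the minimum.

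For the dual lattice, $(T^{-1})^{\top}$ is a generating matrix for $\Lambda^{\perp}$ (as noted in \cref{subsec:lattice}), and its determinant has absolute value $|\det(T^{-1})|$. The same argument applied to $\Lambda^{\perp}$ gives $\lambda_d(\Lambda^{\perp})/\lambda_1(\Lambda^{\perp})\le C_d|\det(T^{-1})|/\lambda_1(\Lambda^{\perp})^d$, and the factor $d$ from \eqref{eq:bound_mesh_raio} produces the second term. Since both bounds hold simultaneously, $\rho(\Lambda)$ is at most their minimum.

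There is no real obstacle here. The only points requiring care are using the correct direction of Minkowski's second theorem (the upper bound on the product of successive minima) and recording the volume of the unit ball correctly, so that the constant comes out as exactly $C_d=2^d\Gamma(1+d/2)/\pi^{d/2}$.
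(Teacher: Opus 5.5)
Your proposal is correct and matches the paper's proof essentially step for step. Both apply Minkowski's second theorem to $\Lambda$ and to $\Lambda^{\perp}$ and use the ordering of the successive minima to get $\lambda_d \le C_d|\det|/\lambda_1^{d-1}$, then insert this into the upper bound in \eqref{eq:bound_mesh_raio} and take the minimum.
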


\begin{proof}
    Minkowski's second theorem states that
    \[ \prod_{j=1}^{d}\lambda_j (\Lambda) \le \frac{2^d\Gamma(1+d/2)}{\pi^{d/2}}\left| \det(T)\right|, \]
    see \cite{C97,S89}. Since the successive minima are ordered, i.e., $\lambda_1(\Lambda) \le \lambda_2(\Lambda) \le \cdots \le \lambda_d(\Lambda)$, we obtain
    \[ \lambda_d(\Lambda) \le \frac{2^d\Gamma(1+d/2)}{\pi^{d/2}}\left| \det(T)\right| \prod_{j=1}^{d-1}\frac{1}{\lambda_j (\Lambda)} \le \frac{2^d\Gamma(1+d/2)}{\pi^{d/2}}\frac{\left| \det(T)\right|}{(\lambda_1 (\Lambda))^{d-1}}. \]    
    Similarly, for the dual lattice $\Lambda^{\perp}$, we have
    \[ \lambda_d(\Lambda^{\perp}) \le \frac{2^d\Gamma(1+d/2)}{\pi^{d/2}}\left| \det(T^{-1})\right| \prod_{j=1}^{d-1}\frac{1}{\lambda_j (\Lambda^{\perp})} \le \frac{2^d\Gamma(1+d/2)}{\pi^{d/2}}\frac{\left| \det(T^{-1})\right|}{(\lambda_1 (\Lambda^{\perp}))^{d-1}}. \]    
    By applying these estimates to the upper bound in \eqref{eq:bound_mesh_raio}, we obtain the desired inequality.
\end{proof}

\begin{remark}
    The reciprocal of the length of the shortest non-zero vector in the dual lattice, $1/\lambda_1(\Lambda^{\perp})$, is called the \emph{spectral test}, which is a standard quality measure for linear congruential pseudo-random number generators \cite{LE88,LE99}. In that context, $1/\lambda_1(\Lambda^{\perp})$ represents the maximum distance between adjacent parallel hyperplanes that cover all lattice points. A larger $\lambda_1(\Lambda^{\perp})$ corresponds to a smaller distance between these hyperplanes, indicating a more ``uniform'' distribution of points. The above lemma provides a theoretical bridge between this classical metric and the space-filling property.
\end{remark}

Before moving on to the next section, we show that the mesh ratio of the point set $P=\Lambda \cap [0,1)^d$ with respect to $[0,1]^d$ is bounded by the mesh ratio of the lattice $\Lambda$ with respect to $\RR^d$, up to a constant factor. Although a similar result was proven in \cite{DGLPS25}, our argument below directly deals with the Euclidean norm, instead of the $\ell_{\infty}$-norm as considered in \cite{DGLPS25}.

\begin{lemma}\label{lem:mesh_ratio_comparison}
    Let $\Lambda\subset \RR^d$ be a lattice and $P=\Lambda\cap [0,1)^d$, Assume $|P|\ge 2$. The separation and covering radii of $P$ with respect to $[0,1]^d$ satisfy
    \[ q(P)\ge q(\Lambda)\quad \text{and}\quad h(P)\le \begin{cases} 2\sqrt{d}\, h(\Lambda) & \text{if $h(\Lambda)\ge 1/2$,}\\ (1+\sqrt{d})h(\Lambda) & \text{otherwise.} \end{cases}\]
    Consequently, the mesh ratio of $P$ is bounded by
    \[ \rho(P)\le \begin{cases} 2\sqrt{d}\, \rho(\Lambda) & \text{if $h(\Lambda)\ge 1/2$,}\\ (1+\sqrt{d})\rho(\Lambda) & \text{otherwise.} \end{cases}.\]
\end{lemma}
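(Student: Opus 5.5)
The separation bound is immediate: every pair of distinct points of $P$ is a pair of distinct points of $\Lambda$, so the minimum in the definition of $q(P)$ is taken over a subset of the pairs used for $q(\Lambda)$. Hence $q(P)\ge q(\Lambda)$; the assumption $|P|\ge 2$ only ensures $q(P)$ is defined. The mesh ratio bound then follows by dividing the covering bound by $q(P)\ge q(\Lambda)$. So the real work is the covering bound, which I split according to the size of $h(\Lambda)$.

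\emph{Case $h(\Lambda)\ge 1/2$.} Here I will not use the lattice structure at all beyond $P\neq\emptyset$. Since $P\subset[0,1]^d$ is nonempty, every $\bsx\in[0,1]^d$ is within distance $\diam([0,1]^d)=\sqrt d$ of some point of $P$. Thus $h(P)\le\sqrt d\le 2\sqrt d\,h(\Lambda)$.

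\emph{Case $h(\Lambda)<1/2$.} Write $h=h(\Lambda)$ and fix $\bsx\in[0,1]^d$. The nearest lattice point to $\bsx$ may lie outside $[0,1)^d$, so I first push $\bsx$ inward. Define $\bsx'$ by clamping each coordinate to $[h,1-h)$; more precisely, take $\bsx'$ in the box $[h,1-h-\epsilon]^d$, which is nonempty since $h<1/2$, with each coordinate moved by at most $h$. Then $\|\bsx-\bsx'\|_2\le\sqrt d\,h$. By the definition of $h(\Lambda)$ there is $\bsy\in\Lambda$ with $\|\bsx'-\bsy\|_2\le h$. Then $|y_i-x_i'|\le h$ for each $i$, so $y_i\in[0,1)$ (up to the $\epsilon$ adjustment), and therefore $\bsy\in P$. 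The triangle inequality gives
\[
\|\bsx-\bsy\|_2\le(1+\sqrt d)\,h+O(\epsilon).
\]
Letting $\epsilon\to0$ yields $h(P)\le(1+\sqrt d)h(\Lambda)$.

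The main obstacle is this boundary issue: the half-open cube $[0,1)^d$ versus the closed interval $[h,1-h]$. If $x_i'=1-h$, then $y_i$ could equal $1$ and fall outside $P$. I handle this with the $\epsilon$-shrinking above. The sup in $h(\Lambda)$ is attained, since the minimum is over a discrete set, so a lattice point within distance exactly $h$ exists. Because $h<1/2$ is strict, $[h,1-h-\epsilon]$ is nonempty for small $\epsilon$. The resulting bound is continuous in $\epsilon$, so the limit gives the stated constant.
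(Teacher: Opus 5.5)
Your proof is correct and follows the paper's argument essentially step for step. For $h(\Lambda)\ge 1/2$ you use the trivial bound $\diam([0,1]^d)=\sqrt d$; otherwise you clamp $\bsx$ into an inner box, take a lattice point within $h(\Lambda)$ of the clamped point, and apply the triangle inequality. Your shrinking of the box to $[h,1-h-\epsilon]^d$ actually repairs a small gap in the paper, whose argument only places the lattice point in the closed cube $[0,1]^d$, where a coordinate equal to $1$ would put it outside $P$. Note that clamping then moves a coordinate by up to $h+\epsilon$ rather than $h$, so your displayed bound $\|\bsx-\bsx'\|_2\le\sqrt d\,h$ should read $\sqrt d\,(h+\epsilon)$; your $O(\epsilon)$ term already absorbs this.
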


\begin{proof}
    The inequality $q(P)\ge q(\Lambda)$ follows immediately from the fact that $P\subset \Lambda$.

    For the covering radius, let us write $\epsilon := h(\Lambda)$. If $\epsilon\ge 1/2$, the result is clear as we have
    \[ h(P)\le \diam([0,1]^d) = \sqrt{d}\le 2\sqrt{d}\, \epsilon.\]
    
    Assume $\epsilon<1/2$. 
    For any $\bsx \in [0,1]^d$, we can choose a ``shifted center'' $\bsz \in [\epsilon, 1-\epsilon]^d$ such that the Euclidean ball $B(\bsz,\epsilon)=\{\bsy\in \RR^d\mid \|\bsy-\bsz\|_2\le \epsilon\}$ is contained in $[0,1]^d$. Specifically, we set
    \[ z_j = \begin{cases}
        \epsilon & \text{if $x_j<\epsilon$,}\\
        1-\epsilon & \text{if $x_j>1-\epsilon$,}\\
        x_j & \text{otherwise,}
    \end{cases}\]
    for each $j=1,\ldots,d$. By the definition of $h(\Lambda)$, there exist a lattice point $\bsy^*\in \Lambda$ such that $\|\bsy^*-\bsz\|_2\le \epsilon$. Since we have $B(\bsz,\epsilon)\subset [0,1]^d$, it follows that $\bsy^*\in P$. Now, we evaluate the distance between $\bsx$ and $\bsy^*$. By the triangle inequality, it holds that
    \[ \|\bsx-\bsy^*\|_2\le \|\bsx-\bsz\|_2+\|\bsz-\bsy^*\|_2\le \|\bsx-\bsz\|_2+\epsilon.\]
    From the definition of $\bsz$, we have $|x_j-z_j|\le \epsilon$ for all $j$, which implies $\|\bsx-\bsz\|_2\le \sqrt{d}\epsilon$. Thus, $\|\bsx-\bsy^*\|_2\le (1+\sqrt{d})\epsilon$. Since this holds for any $\bsx\in [0,1]^d$, we obtain $h(P)\le (1+\sqrt{d})h(\Lambda)$.
    
    The bound on the mesh ratio for $P$ follows immediately.
\end{proof}

This lemma guarantees that finding a lattice $\Lambda$ with a small mesh ratio $\rho(\Lambda)$ is sufficient to construct a quasi-uniform point set $P$. Combining this with \cref{lem:rho_inf_bound}, our goal reduces to finding a generating vector $\bsz$ that maximizes $\lambda_1(\Lambda)$ or $\lambda_1(\Lambda^\perp)$.

\begin{remark}
    For any fixed shift $\bsDelta\in \RR^d$, consider the shifted lattice $\Lambda_{\bsDelta}:=\Lambda+\bsDelta$ and the corresponding finite point set $P_{\bsDelta}=\Lambda_{\bsDelta}\cap [0,1)^d$. Since the covering and separation radii of a lattice are invariant under translation, we have $h(\Lambda_{\bsDelta})=h(\Lambda)$ and $q(\Lambda_{\bsDelta})=q(\Lambda)$. Thus, provided $|P_{\bsDelta}|\ge 2$, the bounds in the above lemma remain valid for the shifted point set $P_{\bsDelta}$:
    \[ \rho(P_{\bsDelta})\le \begin{cases} 2\sqrt{d}\, \rho(\Lambda) & \text{if $h(\Lambda)\ge 1/2$,}\\ (1+\sqrt{d})\rho(\Lambda) & \text{otherwise.} \end{cases} \]
\end{remark}

\section{Construction of space-filling lattice designs}\label{sec:construction}

Based on the theoretical results in the previous section, our goal is to find a generating vector $\bsz$ that makes $\lambda_1(\Lambda)$ (or $\lambda_1(\Lambda^\perp)$) as large as possible.

\subsection{Explicit rank-1 lattice construction}\label{subsec:explicit}

For $d=2$, the Fibonacci lattice with $N=F_m$ and $\bsz=(1,F_{m-1})$ is a known explicit construction that achieves quasi-uniformity \cite{DGLPS25}. For $d\ge 3$, however, no explicit construction of good generating vectors is known so far. This problem is partly addressed in this subsection.

Let $\bsalpha=(\alpha_1,\ldots,\alpha_d)\in \RR^d$ be a fixed vector. Consider an infinite sequence of points of the form $\Scal=(\{n\bsalpha\})_{n\ge 1}$, called the \emph{Kronecker sequence}. In \cite{DGLPS25}, it is proven that the Kronecker sequence is quasi-uniform if $1,\alpha_1,\dots,\alpha_d\in \RR$ are linearly independent elements of an algebraic extension of the rational field $\QQ$ of degree $d+1$. Thus, for instance, if 
\begin{align}\label{eq:Kronecker_vector}
    \bsalpha = \left( p^{1/(d+1)}, p^{2/(d+1)},\dots, p^{d/(d+1)}\right), 
\end{align}
for a prime number $p$, the corresponding Kronecker sequence is quasi-uniform. The proof relies on approximating the initial segment of the Kronecker sequence by a rank-1 lattice point set. A similar proof technique was originally employed by Larcher \cite{L88}, who proved that such Kronecker sequences achieve an isotropic discrepancy of $O(N^{-1/d})$. It turns out that this approximation provides an explicit construction of good generating vectors, both in terms of quasi-uniformity and isotropic discrepancy.

Our construction method proceeds as follows:

\begin{algorithm}[ht]\caption{Explicit construction of rank-1 lattices via algebraic numbers}\label{alg:explicit}\begin{algorithmic}[1]
    \REQUIRE Dimension $d \ge 1$, algebraic vector $\bsalpha \in \RR^d$ such that $\{1, \alpha_1, \dots, \alpha_d\}$ is linearly independent over $\QQ$ and spans an algebraic extension of degree $d+1$.
    \STATE Set $N_1 = 1$.
    \FOR{$i = 2, 3, \dots$}
    \STATE \label{line3} Find the smallest integer $N > N_{i-1}$ such that $\langle N \bsalpha \rangle < \langle N_{i-1} \bsalpha \rangle$, where $\langle \bsx \rangle = \max_{1 \le j \le d} \min_{k_j \in \ZZ} |x_j - k_j|$ denotes the $\ell_\infty$-distance to the nearest integer vector.
    \STATE Set $N_i = N$.
    \STATE For each $j=1,\dots,d$, let $z_j$ be the nearest integer to $N_i\alpha_j$.
    \STATE \textbf{Output} the pair $(N_i, \bsz)$ with $\bsz = (z_1, \dots, z_d)$.
\ENDFOR\end{algorithmic}\end{algorithm}

The major advantage of this algorithm is that the generating vector $\bsz$ is determined explicitly by rounding each component of $N_i\bsalpha$ to the nearest integer. The disadvantage, however, is that the sequence of point numbers $N_i$ cannot be chosen arbitrarily, as it is strictly determined by the Diophantine properties of the chosen vector $\bsalpha$. This is analogous to the two-dimensional Fibonacci lattice, where the number of points is restricted to Fibonacci numbers.

The minimality of $N_i$ in Step~3 automatically ensures that $\gcd(N_i, z_1, \dots, z_d) = 1$; if there were a common divisor $g > 1$, then $N' = N_i/g$ and $z'_j=z_j/g$ ($j=1,\ldots,d$) would be integers, yielding a smaller error 
\[ \langle N' \bsalpha \rangle \le \max_{1\le j\le d}|N'\alpha_j-z'_j| = \frac{1}{g}\max_{1\le j\le d}|N\alpha_j-z_j| =\frac{1}{g} \langle N_i \bsalpha \rangle < \langle N_i \bsalpha \rangle, \] 
contradicting the fact that $N_i$ is the smallest integer achieving an error less than $\langle N_{i-1} \bsalpha \rangle$. Furthermore, it was proven in \cite{DGLPS25} that $N_{i+1}\le 2c^{-d}N_i$ for any $i\in \NN$, where $c$ is the constant appearing in \eqref{eq:badly_approximable} below. Thus, in the light of \cref{def:quasi-uniform_point_set}, this bounded growth of the sequence $(N_i)$ implies that to show \cref{alg:explicit} generates a quasi-uniform family of point sets, it suffices to prove that the mesh ratio of each rank-1 lattice point set $P_{N_i,\bsz}$ is bounded by a constant independent of $i$.

\begin{remark}
    The assumption on $\bsalpha$ ensures that $\bsalpha$ is badly approximable in the sense of simultaneous Diophantine approximation. That is, there exists a constant $0<c\le 1/2$, depending only on $\bsalpha$ and $d$, such that
    \begin{align}\label{eq:badly_approximable}
        \langle k \bsalpha \rangle \ge \frac{c}{k^{1/d}}\quad \text{for all $k\in \NN$.}
    \end{align}
    This property is a direct consequence of Schmidt's subspace theorem (cf.~\cite[Chapter~V, Theorem~1B]{S80}), and is fundamental in \cite{DGLPS25} to proving that the corresponding Kronecker sequence is quasi-uniform.
\end{remark}

Now let us consider any pair $(N_i,\bsz)$ obtained by \cref{alg:explicit}, and the corresponding generating matrix of the form \eqref{eq:generating_matrix} and its lattice $\Lambda$.
Following the results in \cite[Lemma~3.12]{DGLPS25}, the algebraic properties of $\bsalpha$ ensure that there exists a constant $c_1>0$, depending only on $\bsalpha$ and $d$, such that
\begin{align}\label{eq:lower_lambda_1}
    \lambda_1(\Lambda)\ge \frac{c_1}{N_i^{1/d}}.
\end{align}
Applying this lower bound to \cref{lem:rho_inf_bound}, we obtain the following result:

\begin{theorem}\label{thm:main1}
    The sequence of rank-1 lattice point sets $\{P_{N_i, \bsz}\}_{i >1}$ generated by \cref{alg:explicit} is quasi-uniform.
\end{theorem}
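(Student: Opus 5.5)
The plan is to combine the lower bound \eqref{eq:lower_lambda_1} with \cref{lem:rho_inf_bound} and \cref{lem:mesh_ratio_comparison}. Fix $i>1$ and let $(N_i,\bsz)$ be the output of \cref{alg:explicit}. Let $\Lambda=\frac{1}{N_i}\bsz\ZZ+\ZZ^d$ be the associated lattice. We first check that $P_{N_i,\bsz}=\Lambda\cap[0,1)^d$ consists of $N_i$ distinct points. This holds because $\gcd(N_i,z_1,\dots,z_d)=1$, as argued after \cref{alg:explicit}. Hence there is a generating matrix $T$ with $|\det(T)|=1/N_i$. If no component of $\bsz$ is coprime to $N_i$, the explicit form \eqref{eq:generating_matrix} is unavailable. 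In that case we only use the volume argument of \cref{subsec:lattice}, which gives $|\det T|=1/N_i$ whenever the $N_i$ points are distinct.

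Next we apply the first entry of the minimum in \cref{lem:rho_inf_bound} together with \eqref{eq:lower_lambda_1}:
\[ \rho(\Lambda)\le C_d\sqrt{d}\,\frac{|\det(T)|}{(\lambda_1(\Lambda))^d}\le C_d\sqrt{d}\,\frac{1/N_i}{c_1^d/N_i}=\frac{C_d\sqrt{d}}{c_1^d}. \]
The right-hand side depends only on $d$ and $\bsalpha$, not on $i$. Then \cref{lem:mesh_ratio_comparison} transfers this bound to the point set, giving $\rho(P_{N_i,\bsz})\le 2\sqrt{d}\,C_d\sqrt{d}/c_1^d$ in either case of that lemma. The lemma requires $|P|\ge2$, which holds since $N_i>N_1=1$ for $i>1$.

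Finally, we verify the growth condition of \cref{def:quasi-uniform_point_set}. The $N_i$ are strictly increasing by construction, and the result quoted from \cite{DGLPS25} gives $N_{i+1}\le 2c^{-d}N_i$. So the family satisfies $N_i<N_{i+1}\le c'N_i$ with $c'=\max(2c^{-d},2)>1$. Together with the uniform mesh-ratio bound, this shows that $\{P_{N_i,\bsz}\}_{i>1}$ is a quasi-uniform family.

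The substantive input is the bound \eqref{eq:lower_lambda_1}, which we take from \cite[Lemma~3.12]{DGLPS25}. If it had to be re-derived, the approach would be to take a nonzero $\bsv=\frac{n}{N_i}\bsz+\bsk\in\Lambda$ with $|n|<N_i$. Write $\frac{n}{N_i}\bsz=n\bsalpha-\frac{n}{N_i}(N_i\bsalpha-\bsz)$. Then \eqref{eq:badly_approximable} bounds $\langle n\bsalpha\rangle$ from below by $c|n|^{-1/d}$, and the perturbation is at most $\frac{|n|}{N_i}\langle N_i\bsalpha\rangle\le c''N_i^{-1/d}$. The delicate point is controlling the cancellation when $|n|$ is comparable to $N_i$. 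There I would use the minimality of $N_i$, namely $\langle m\bsalpha\rangle\ge\langle N_{i-1}\bsalpha\rangle$ for $m<N_i$. Given the cited lemma, the theorem itself is a direct combination of the results above.
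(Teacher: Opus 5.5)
Your proposal is correct and follows the paper's proof. You plug the cited bound \eqref{eq:lower_lambda_1} into the first term of \cref{lem:rho_inf_bound} to get $\rho(\Lambda)\le \sqrt{d}\,C_d/c_1^d$, transfer this to the point set via \cref{lem:mesh_ratio_comparison}, and use the growth bound $N_{i+1}\le 2c^{-d}N_i$ from \cite{DGLPS25}; your extra checks ($|\det T|=1/N_i$ via the volume argument, and $|P|\ge 2$ for $i>1$) simply make explicit what the paper leaves implicit.
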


\begin{proof}
    Following \cref{lem:rho_inf_bound} together with the lower bound on $\lambda_1(\Lambda)$, the lattice $\Lambda$ corresponding to the rank-1 lattice point set $P_{N_i,\bsz}$ satisfies
    \begin{align*}
        \rho(\Lambda) \le \sqrt{d}\, C_d \frac{|\det(T)|}{(\lambda_1(\Lambda))^d}\le \sqrt{d}\, C_d \frac{1/N_i}{(c_1/N_i^{1/d})^d} = \frac{\sqrt{d}\, C_d}{c_1^d},
    \end{align*}
    where the upper bound is a constant independent of $N_i$. This uniform bound on $\rho(\Lambda)$, combined with the bounded growth of the sequence $(N_i)$ discussed above and \cref{lem:mesh_ratio_comparison}, ensures that the sequence of point sets $\{P_{N_i, \bsz}\}$ is quasi-uniform.
\end{proof}

As announced, we further show that the same sequence of rank-1 lattice point sets achieves an isotropic discrepancy of $O(N^{-1/d})$. First, let us recall the definition: for a point set $P\subset [0,1)^d$ with $N$ points, the isotropic discrepancy $J(P)$ is defined as
\[ J(P) = \sup_{C\in \Ccal}\left| \frac{|P\cap C|}{N}-\vol(C)\right|, \]
where $\Ccal$ denotes the family of all convex subsets of $[0,1)^d$. It is worth noting that the optimal order of the isotropic discrepancy is known to be $O(N^{-2/(d+1)})$ up to logarithmic factors; see \cite{S75} for the lower bound and \cite{B88,S77} for the upper bound. However, this rate is not achievable by lattice point sets. In fact, within the class of integration lattices, the rate of $O(N^{-1/d})$ is known to be the best possible; see \cite[Theorems~5.24 and 5.26]{DKP22}.

\begin{theorem}
    The rank-1 lattice point set $P_{N_i, \bsz}$ generated by \cref{alg:explicit} satisfies
    \[ J(P_{N_i,\bsz})\le \frac{C}{N_i^{1/d}},\]
    for some constant $C$ independent of $i$.
\end{theorem}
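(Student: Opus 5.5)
The plan is to estimate the isotropic discrepancy of $P=\Lambda\cap[0,1)^d$ directly from the geometry of $\Lambda$, using the two facts already available. First, the lower bound \eqref{eq:lower_lambda_1} gives $\lambda_1(\Lambda)\ge c_1N_i^{-1/d}$. Second, \cref{thm:main1} and its proof give $\rho(\Lambda)\le \sqrt{d}C_d/c_1^d$. Together these yield $h(\Lambda)\le C' N_i^{-1/d}$ for a constant $C'$ independent of $i$.

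The approach is a Voronoi-cell (or fundamental-cell) counting argument. Let $V$ be the Voronoi cell of $\Lambda$ at the origin. It has volume $\det(T)=1/N_i$, and since $\|\bsy\|_2\le h(\Lambda)$ for all $\bsy\in V$, its diameter is at most $2h(\Lambda)=:2\epsilon$. The translates $\bsy+V$ with $\bsy\in\Lambda$ tile $\RR^d$.

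Fix a convex set $C\subset[0,1)^d$ and compare $|P\cap C|/N_i=\vol\bigl(\bigcup_{\bsy\in\Lambda\cap C}(\bsy+V)\bigr)$ with $\vol(C)$. Every cell $\bsy+V$ with $\bsy\in C$ lies in the outer parallel body $C_{+2\epsilon}=C+2\epsilon B_d$. Every point of the inner parallel body $C_{-2\epsilon}=\{\bsx: \bsx+2\epsilon B_d\subset C\}$ lies in a cell whose center belongs to $C$. This gives
\[ \vol(C_{-2\epsilon})\le \frac{|P\cap C|}{N_i}\le \vol(C_{+2\epsilon}). \]
It remains to bound $\vol(C_{+2\epsilon})-\vol(C_{-2\epsilon})$ by $O(\epsilon)$ uniformly over convex $C\subset[0,1]^d$.

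For the outer body, Steiner's formula gives $\vol(C+2\epsilon B_d)-\vol(C)=\sum_{k\ge1}V_{d-k}(C)\kappa_k(2\epsilon)^k$. By monotonicity of intrinsic volumes under inclusion, $V_{d-k}(C)\le V_{d-k}([0,1]^d)$, so this difference is $O(\epsilon)$ when $\epsilon\le1$. For the inner body, $C\setminus C_{-2\epsilon}$ is contained in the $2\epsilon$-neighbourhood of $\partial C$. Its volume is at most $2\epsilon$ times the surface area of $C$, which is at most $2d$, the surface area of the cube, again by monotonicity of surface area for convex sets. Hence $J(P)\le C''\epsilon\le C N_i^{-1/d}$ for all $i$ with $\epsilon$ small. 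The finitely many remaining $i$ are covered trivially since $J\le1$, after enlarging $C$.

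The main obstacle is the inner-body estimate. One must check carefully that a point $\bsx\in C_{-2\epsilon}$ lies in a Voronoi cell $\bsy+V$ with $\|\bsx-\bsy\|_2\le\epsilon$, so that $\bsy\in C$. This holds because $\bsx\in\bsy+V$ means $\bsy$ is a nearest lattice point to $\bsx$, hence $\|\bsx-\bsy\|_2\le h(\Lambda)$. The cell volumes must also be counted without overlap, which holds because the cells tile up to boundaries of measure zero.

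A second point to verify is that only $\Lambda\cap C\subset P$ is counted. This requires $C\subset[0,1)^d$, so that $\Lambda\cap C=P\cap C$. It holds by the definition of $\Ccal$.
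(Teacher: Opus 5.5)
Your argument is correct, but it takes a different route from the paper's proof.

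\textbf{The paper's route.} The paper works on the dual side. It combines \eqref{eq:lower_lambda_1} with the transference bound $\lambda_d(\Lambda^{\perp})\le d/\lambda_1(\Lambda)$ and the lower half of Minkowski's second theorem to obtain $\lambda_1(\Lambda^{\perp})\gtrsim N_i^{1/d}$. It then invokes the spectral-test estimate $J(P)\le d\,2^{2(d+1)}/\lambda_1(\Lambda^{\perp})$ from \cite[Theorem~5.22]{DKP22} as a black box.

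\textbf{Your route.} You stay on the primal side. Using the Voronoi tiling and parallel-body volume estimates, you prove directly that $J(P)\le c_d\,h(\Lambda)$ for any integration lattice, and then feed in a covering-radius bound. Banaszczyk's transference gives $\frac12\le h(\Lambda)\lambda_1(\Lambda^{\perp})\le\frac d2$, so your inequality and the cited one agree up to $d$-dependent constants.

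\textbf{Comparison.} Your approach is self-contained and makes explicit that the discrepancy bound follows from the covering half of quasi-uniformity established in \cref{thm:main1}. The price is reliance on standard convex-geometry facts that you should cite: Steiner's formula with monotonicity of intrinsic volumes, and especially the inner estimate $\vol(C\setminus C_{-r})\le r\,S(C)$, where $S(C)$ denotes surface area. The latter follows, e.g., from the coarea formula for the distance to $\partial C$, which gives $\vol(C)-\vol(C_{-r})=\int_0^r S(C_{-t})\,dt$, together with monotonicity of surface area under inclusion of convex sets. The paper's route is shorter and yields an explicit constant.

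\textbf{A step that needs repair.} The claim that $\lambda_1(\Lambda)\ge c_1N_i^{-1/d}$ and $\rho(\Lambda)\le\sqrt d\,C_d/c_1^d$ together yield $h(\Lambda)\le C'N_i^{-1/d}$ does not follow as stated. Since $h=\rho\,q$, an upper bound on $\rho$ only helps if you also have an upper bound on $q(\Lambda)$, and a lower bound on $\lambda_1(\Lambda)$ cannot supply one. Two easy fixes:
\begin{itemize}
\item Within your own framework, the open ball of radius $q(\Lambda)$ about the origin lies in $V$. This gives $\kappa_d\,q(\Lambda)^d\le 1/N_i$, where $\kappa_d$ is the volume of $B_d$.
\item Alternatively, argue directly: $h(\Lambda)\le\frac{\sqrt d}{2}\lambda_d(\Lambda)\le\frac{\sqrt d}{2}C_d N_i^{-1}\lambda_1(\Lambda)^{-(d-1)}\le\frac{\sqrt d}{2}C_d c_1^{-(d-1)}N_i^{-1/d}$. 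This chain is exactly what the proof of \cref{lem:rho_inf_bound} yields.
\end{itemize}
With this fixed, the rest of your argument goes through, including the handling of finitely many small $N_i$ via $J\le 1$.
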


\begin{proof}
    Consider the lattice $\Lambda$ corresponding to the rank-1 lattice point set $P_{N_i,\bsz}$. By applying the lower bound for $\lambda_1(\Lambda)$ shown in \eqref{eq:lower_lambda_1} to the transference theorem \eqref{eq:transference}, we have
    \begin{align*}
        \lambda_d(\Lambda^{\perp})\le \frac{d}{\lambda_1(\Lambda)}\le \frac{d}{c_1}N_i^{1/d}.
    \end{align*}
    Using Minkowski's second theorem, and noting that $\lambda_1(\Lambda^{\perp})\le \cdots \le \lambda_d(\Lambda^{\perp})$, we obtain
    \begin{align*}
        \lambda_1(\Lambda^{\perp}) & \ge \frac{2^d\Gamma(1+d/2)}{d!\, \pi^{d/2}}\left| \det(T^{-1})\right| \prod_{j=2}^{d}\frac{1}{\lambda_j(\Lambda^{\perp})} \\
        & \ge \frac{2^d\Gamma(1+d/2)}{d!\, \pi^{d/2}}\frac{\left| \det(T^{-1})\right| }{(\lambda_d(\Lambda^{\perp}))^{d-1}}  \ge \frac{2^d\Gamma(1+d/2)}{d!\,\pi^{d/2}}\left( \frac{c_1}{d}\right)^{d-1} N_i^{1/d},
    \end{align*}
    where we used $\det(T^{-1}) = N$. 
    Finally, we employ the bound relating isotropic discrepancy to the dual lattice given in \cite[Theorem~5.22]{DKP22}. This leads to
    \begin{align*}
        J(P_{N_i,\bsz}) \le \frac{d\, 2^{2(d+1)}}{\lambda_1(\Lambda^{\perp})}\le \frac{d^d\, 2^{d+2}\, d!\,\pi^{d/2}}{c_1^{d-1}\Gamma(1+d/2)}\cdot \frac{1}{N_i^{1/d}},
    \end{align*}
    where the leading factor is a constant independent of $N_i$. This completes the proof.
\end{proof}

\subsection{Algorithmic Korobov lattice construction}\label{subsec:korobov}

The explicit construction given in the previous section has a practical limitation, as the sequence of point numbers $(N_i)_{i\in \NN}$ is strictly determined by the Diophantine properties of the algebraic vector $\bsalpha$. In many applications, however, it is desirable to construct a rank-1 lattice point set with a prescribed number of points to meet a specific computational budget.

To this end, here we consider the class of Korobov lattice point sets, which are a special case of rank-1 lattice point sets where the generating vector takes the form $\bsz=(1,a,\ldots,a^{d-1}) \pmod N$ for some parameter $a\in \{1,\ldots,N-1\}$. For a fixed $N$, a ``good''  parameter $a$ can be found by a brute-force search over the candidate set $\{1,\ldots,N-1\}$. To evaluate the quality of each $a$, one needs to compute the length of the shortest non-zero vector of the associated lattice $\Lambda$ or its dual lattice $\Lambda^{\perp}$ (cf. the bound on the mesh ratio shown in \cref{lem:rho_inf_bound}).

While the shortest vector problem is known to be NP-hard in general, the LLL lattice basis reduction algorithm \cite{LLL82,NS09,NV10} provides a high-quality approximation (and often the exact shortest vector) in polynomial time. As a related study, we refer to \cite{HE06}, which attempted to use the LLL algorithm to optimize Korobov lattice point sets in the context of high-dimensional integration.

The search procedure is summarized in \cref{alg:lll_search}. Regarding its computational complexity, the search involves $N-1$ iterations. In each iteration, the LLL algorithm is applied to $d$-dimensional lattices with matrix entries bounded by $N$. The original exact arithmetic formulation requires $O(d^6 (\log N)^3)$ arithmetic operations, whereas it can be improved to $O(d^4\log N)$ when using floating-point arithmetic \cite{NS09}. Consequently, the total cost of the search is almost linear in $N$ and polynomial in $d$, making the construction feasible for moderately high dimensions $d$ and large $N$.

\begin{algorithm}[ht]
\caption{LLL-based search for Korobov parameters}\label{alg:lll_search}
\begin{algorithmic}[1]
\REQUIRE Number of points $N$, dimension $d$.
\STATE Initialize $score_{\max} \gets 0$ and $a^* \gets 1$.
\FOR{$a = 1$ \TO $N-1$}
\STATE Construct the generating matrices $T$ and $(T^{-1})^{\top}$ for the lattice $\Lambda$ and its dual lattice $\Lambda^\perp$ as in \eqref{eq:generating_matrix} and \eqref{eq:dual_matrix}, respectively.
\STATE Apply the LLL algorithm to $T$ and $(T^{-1})^{\top}$ to obtain reduced bases.
\STATE Let $\bsv_1$ and $\bsv_1^\perp$ be the shortest vectors in the reduced bases.
\STATE $current\_score \gets \|\bsv_1\|_2 \cdot \|\bsv_1^\perp\|_2$.
\IF{$current\_score > score_{\max}$}
\STATE $score_{\max} \gets current\_score$.
\STATE $a^* \gets a$.
\ENDIF
\ENDFOR
\RETURN $a^*$.
\end{algorithmic}
\end{algorithm}

The theoretical analysis below is based on the bound 
\begin{align}\label{eq:bound_mesh_ratio_for_theory}
     \rho(\Lambda) \le d\, C_d \frac{N}{(\lambda_1(\Lambda^{\perp}))^d}
\end{align}
which stems from \cref{lem:rho_inf_bound} and the fact that $\det(T^{-1})=N$. 
This bound implies that, if there exists a parameter $a$ (within the candidate set) such that $\lambda_1(\Lambda^{\perp})=\Omega(N^{1/d})$, then $\rho(\Lambda)$ is bounded by a constant independent of $N$.
In the numerical implementation, as described in \cref{alg:lll_search}, however, we rely on the product $\lambda_1(\Lambda)\lambda_1(\Lambda^{\perp})$ as a robust quality measure. This product also serves as an upper bound on the mesh ratio:
\[ \rho(\Lambda) \le \sqrt{d} \frac{\lambda_d(\Lambda)}{\lambda_1(\Lambda)} \le \frac{d\sqrt{d}}{\lambda_1(\Lambda)\lambda_1(\Lambda^{\perp})}, \]
where we have applied the transference theorem \eqref{eq:transference}. By maximizing this product, we intend to achieve a favorable balance between the separation property (governed by $\lambda_1(\Lambda)$) and the covering property (governed by $\lambda_1(\Lambda^{\perp})$).

As a counterpart of \cref{thm:main1}, we now provide a theoretical guarantee that \cref{alg:lll_search} also generates a quasi-uniform family of point sets.

\begin{theorem}\label{thm:main2}
    Let $(N_i)_{i\in \NN}$ be a sequence of point numbers such that $N_i$ denotes the $i$-th smallest prime number. For each $N_i$, let the parameter $a_i\in \{1,\ldots,N_i-1\}$ for the Korobov lattice point set be the one found by \cref{alg:lll_search} for the dimension $d$. Then, the resulting sequence of Korobov lattice point sets is quasi-uniform.
\end{theorem}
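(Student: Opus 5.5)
The plan has three steps, carried out in this order: first show that a good Korobov parameter exists for every prime $N$; then show that the LLL-based score detects it up to constants; finally control the mesh ratio of the parameter actually returned by \cref{alg:lll_search}.

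\textbf{Step 1 (existence).} For each prime $N=N_i$ I will show that some $a\in\{1,\ldots,N-1\}$ gives $\lambda_1(\Lambda^{\perp})\ge c\,N^{1/d}$, with $c>0$ depending only on $d$. The dual lattice is
\[ \Lambda^{\perp}=\{\bsh\in\ZZ^d \mid h_1+h_2a+\cdots+h_da^{d-1}\equiv 0 \pmod N\}. \]
I will use a counting (averaging) argument. Fix a nonzero $\bsh$ with $\|\bsh\|_2\le r$ and $\bsh\not\equiv\bszero\pmod N$. Then $h_1+h_2x+\cdots+h_dx^{d-1}$ is a nonzero polynomial over $\ZZ_N$ of degree at most $d-1$, so it has at most $d-1$ roots $a$. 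The number of integer $\bsh$ in $rB_d$ is at most $C(2r+1)^d$. If $r<N$, every nonzero such $\bsh$ is nonzero mod $N$. Hence the number of ``bad'' $a$, those with $\lambda_1(\Lambda^\perp)\le r$, is at most $(d-1)C(2r+1)^d$. Choosing $r=cN^{1/d}$ with $c$ small makes this count smaller than $N-1$, so a good $a$ exists.

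\textbf{Step 2 (the LLL score detects it).} Let $\bsv_1$ and $\bsv_1^\perp$ be the shortest LLL basis vectors. They satisfy
\[ \lambda_1\le\|\bsv_1\|\le 2^{(d-1)/2}\lambda_1, \]
and likewise for the dual. So the algorithm's score is $\lambda_1(\Lambda)\lambda_1(\Lambda^\perp)$ up to a factor $2^{d-1}$. For the good $a$ from Step 1, I need $\lambda_1(\Lambda)\lambda_1(\Lambda^\perp)\ge c'$.

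\begin{itemize}
\item Since $\lambda_1(\Lambda^\perp)\ge cN^{1/d}$, \eqref{eq:bound_mesh_ratio_for_theory} gives $\rho(\Lambda)\le C'$.
\item From \eqref{eq:bound_mesh_raio}, $\lambda_d(\Lambda)\le C'\lambda_1(\Lambda)$.
\item Minkowski's second theorem in the form $\prod_j\lambda_j(\Lambda)\ge (2^d/d!)\vol(B_d)^{-1}N^{-1}$ gives $\lambda_d(\Lambda)^d\gtrsim N^{-1}$. Hence $\lambda_1(\Lambda)\gtrsim N^{-1/d}$.
\end{itemize}

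Multiplying, the product is bounded below by a constant. Consequently the returned $a^*$ satisfies $\lambda_1\lambda_1^\perp\ge 2^{-(d-1)}c'$, because its score is at least that of the good $a$ and the score overestimates the product by at most $2^{d-1}$.

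\textbf{Step 3 (mesh ratio of the output).} For $a^*$, the displayed inequality $\rho(\Lambda)\le d\sqrt d/(\lambda_1(\Lambda)\lambda_1(\Lambda^\perp))$ bounds $\rho(\Lambda)$ by a constant. \cref{lem:mesh_ratio_comparison} then transfers this bound to $\rho(P_{N_i,\bsz})$; here $|P|=N_i\ge 2$.

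Bertrand's postulate gives $N_{i+1}<2N_i$, so the growth condition of \cref{def:quasi-uniform_point_set} holds with $c=2$. Small $N_i$ are handled by finiteness, or are absorbed into the constants. This completes quasi-uniformity.

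The main obstacle is Step 1: making the counting bound uniform and correctly handling vectors $\bsh\equiv\bszero\pmod N$. Those vectors satisfy $\|\bsh\|\ge N\gg cN^{1/d}$, so they are excluded by taking $r<N$. The other point needing care is ensuring the score comparison accounts for the LLL approximation factors on both lattices; this is why Step 2 needs the two-sided bound $\lambda_1(\Lambda)\gtrsim N^{-1/d}$ for the good parameter.
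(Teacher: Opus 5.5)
Your proof is correct and follows the same route as the paper. Both use a root-counting argument over $\ZZ_N$ to produce a good $a$ with $\lambda_1(\Lambda^{\perp})$ of order $N^{1/d}$; you count lattice points in Euclidean balls where the paper uses $\ell_1$-balls. Both then compare the returned $a^*$ with that $a$ through the product score, and finish with $\rho(\Lambda)\le d\sqrt{d}/(\lambda_1(\Lambda)\lambda_1(\Lambda^{\perp}))$, \cref{lem:mesh_ratio_comparison} and Bertrand's postulate.

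The one substantive difference is in your favor. The paper's proof treats the LLL score as exactly $\lambda_1(\Lambda)\lambda_1(\Lambda^{\perp})$, whereas you correctly absorb the approximation factor $2^{(d-1)/2}$ on each lattice. You bound the product at the good $a$ from below via $\rho$ and the lower half of Minkowski's second theorem. The paper reaches the same bound in one line from $\lambda_1(\Lambda)\lambda_d(\Lambda^{\perp})\ge 1$ and the upper half of Minkowski's theorem applied to $\Lambda^{\perp}$.
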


\begin{proof}
According to Bertrand's postulate, $N_{i+1}<2N_i$ holds for all $i$, ensuring the growth condition required for the family of point sets to be quasi-uniform (see \cref{def:quasi-uniform_point_set}). Thus, it suffices to prove that the mesh ratio $\rho(\Lambda)$ remains bounded by a constant for any prime number $N$.

Based on the bound on the mesh ratio \eqref{eq:bound_mesh_ratio_for_theory}, our goal is to show that for each fixed prime $N$, there exists a parameter $a\in \{1,2,\ldots,N-1\}$ such that $\lambda_1 (\Lambda^{\perp})=\Omega(N^{1/d})$. We follow a counting argument similar to the one used in \cite{DGLPS25} for general rank-1 lattice point sets.

Using the relationship between the Euclidean norm and the $\ell_1$-norm, the length of the shortest non-zero vector in the dual lattice $\Lambda^{\perp}$ can be bounded as
\begin{align*}
    \lambda_1 (\Lambda^{\perp}) & = \min_{\bsx\in \Lambda^{\perp}\setminus \{\bszero\}}\|\bsx\|_2 \ge \frac{1}{\sqrt{d}} \min_{\bsx\in \Lambda^{\perp}\setminus \{\bszero\}}\|\bsx\|_1\\
    & = \frac{1}{\sqrt{d}}  \min_{\substack{\bsh\in \ZZ^d\setminus \{\bszero\}\\ \bsh\cdot (1,a,\ldots,a^{d-1})\equiv 0 \pmod N}}\|\bsh\|_1.
\end{align*}
Here, we used the fact that the dual lattice for the Korobov lattice $\Lambda$ can be explicitly written as
\[ \Lambda^{\perp} := \left\{ \bsh \in \ZZ^d \mid \bsh \cdot (1,a,\ldots,a^{d-1}) \equiv 0 \pmod{N} \right\}. \]

For any $\ell\in \NN_0$, the number of integer vectors with $\ell_1$-norm equal to $\ell$ satisfies
\[ \left| \left\{ \bsh\in \ZZ^d \mid \|\bsh\|_1=\ell \right\}\right| \le 2^d\binom{\ell+d-1}{d-1}, \]
from which we can bound the number of vectors within an $\ell_1$-ball of radius $\kappa$ as 
\[ \left| \left\{ \bsh\in \ZZ^d \mid \|\bsh\|_1\le \kappa \right\}\right| \le \sum_{\ell=0}^{\kappa}2^d\binom{\ell+d-1}{d-1}= 2^d\binom{\kappa+d}{d}.\]
For each $\bsh\in \ZZ^d\setminus \{\bszero\}$ such that $\|\bsh\|_1\le \kappa<N$, the congruence 
\[ h_1+h_2 a+h_3a^2+\cdots+h_d a^{d-1}\equiv 0 \pmod N \]
is a non-zero polynomial equation in $a$ over the field $\ZZ_N$ of degree at most $d-1$. Since $N$ is prime, there are at most $d-1$ choices of $a \in \{1, \dots, N-1\}$ that satisfy this condition. Consequently, the number of ``bad'' parameters $a$ for which $\min \|\bsh\|_1 \le \kappa$ is bounded by
\[ \left| \left\{ a\in \{1,\ldots,N-1\}\mid \min_{\substack{\bsh\in \ZZ^d\setminus \{\bszero\}\\ \bsh\cdot (1,a,\ldots,a^{d-1})\equiv 0 \pmod N}}\|\bsh\|_1 \le \kappa \right\}\right|\le (d-1)2^d\binom{\kappa+d}{d}. \]

Since there are $N-1$ possible values for $a$, at least one ``good'' parameter $a$ (i.e., one yielding $\min \|\bsh\|_1>\kappa$) exists if 
\begin{align}\label{eq:condition}
    (d-1)2^d\binom{\kappa+d}{d}\le \frac{2^d(\kappa+d)^d}{(d-1)!}< N-1.
\end{align} 
We choose $\kappa$ as follows:
\[ \kappa=\left\lceil 2^{-1}\left( (d-1)!(N-1)\right)^{1/d}\right\rceil -d-1,\]
This choice satisfies the condition \eqref{eq:condition} whenever $\kappa > 0$. Let $N_0$ be the threshold (dependent only on $d$) such that $\kappa > 0$ for all $N \ge N_0$. 
For $N \ge N_0$, we have $\kappa = \Omega(N^{1/d})$, which implies the existence of a parameter $a$ such that $\lambda_1(\Lambda^{\perp}) \ge \kappa/\sqrt{d} = \Omega(N^{1/d})$.
For the finite set of primes $N < N_0$, since $\Lambda^{\perp}$ is a full-rank lattice, $\lambda_1(\Lambda^{\perp})$ is strictly positive. Thus, the minimum of the ratio $\lambda_1(\Lambda^{\perp}) / N^{1/d}$ over this finite set of primes is also strictly positive.
Combining these two cases, there exists a constant $c(d) > 0$ independent of $N$ such that
\[ \lambda_1 (\Lambda^{\perp}) \ge c(d)N^{1/d} \]
holds for all primes $N$. Substituting this into \eqref{eq:bound_mesh_ratio_for_theory}, we conclude that $\rho(\Lambda)$ is uniformly bounded.

Finally, we justify the scoring criterion used in \cref{alg:lll_search}. Although the algorithm selects the parameter $a^*$ that maximizes the product $\lambda_1(\Lambda)\lambda_1(\Lambda^{\perp})$, this is essentially equivalent to maximizing $\lambda_1(\Lambda^{\perp})$ (up to constant factors). In fact, from the transference theorem \eqref{eq:transference} and Minkowski's second theorem, we have
\[
    \frac{1}{\lambda_1(\Lambda)\lambda_1(\Lambda^{\perp})} \le \frac{\lambda_d(\Lambda^{\perp})}{\lambda_1(\Lambda^{\perp})} \le C_d \frac{|\det(T^{-1})|}{(\lambda_1(\Lambda^{\perp}))^d} = C_d \frac{N}{(\lambda_1(\Lambda^{\perp}))^d},
\]
where $C_d = 2^d \Gamma(1+d/2) / \pi^{d/2}$. Since \cref{alg:lll_search} performs an exhaustive search, the selected $a^*$ satisfies:
\[ \frac{1}{\lambda_1(\Lambda_{a^*})\lambda_1(\Lambda^{\perp}_{a^*})}=\min_{a\in \{1,\ldots,N-1\}}\frac{1}{\lambda_1(\Lambda_{a})\lambda_1(\Lambda^{\perp}_{a})}\le \min_{a\in \{1,\ldots,N-1\}}C_d \frac{N}{(\lambda_1(\Lambda^{\perp}_a))^d}. \]
Thus, the above existence result implies that the reciprocal of the product for $a^*$ remains bounded by a constant, ensuring that the mesh ratio $\rho(\Lambda_{a^*})$ satisfies the quasi-uniformity condition.
\end{proof}

\begin{remark}
    Since the moduli $N_i$ are assumed to be prime in \cref{thm:main2}, for any choice of the parameter $a \in \{1, \dots, N_i-1\}$, the corresponding generating vector $\bsz$ automatically satisfies $\gcd(N_i, z_j) = 1$ for all $j = 1, \dots, d$. This condition ensures that any one-dimensional projection of $P_{N_i, \bsz}$ coincides with the set of equidistant points $\{k/N_i \mid k=0, \dots, N_i-1\}$. This property is characteristic of Latin hypercube designs \cite{HRDH11,J16,KIWINTSG25,MBC79}, ensuring that the points are fully stratified across all dimensions. This property does not necessarily hold for the explicit rank-1 lattice construction, since we only have $\gcd(N, z_1, \dots, z_d) = 1$, so that the one-dimensional projections may overlap.
\end{remark}

\section{Numerical experiments}\label{sec:experiments}

The theoretical analysis in the previous section established that our two construction methods provide sequences of quasi-uniform designs. In this section, we numerically examine their space-filling properties and the performance of Gaussian process regression using these sequences as sampling nodes. We compare our proposed designs against standard low-discrepancy sequences and random point sets, where we employ the \texttt{QMCPy} framework \cite{SHCRRJ26} to generate the Halton and Sobol' sequences. Regarding the explicit rank-1 lattice construction, we employ the generating vector $\bsalpha$ of the form \eqref{eq:Kronecker_vector} with parameter $p$. Unless otherwise stated, we set $p=2$. To ensure reproducibility, the code used in these experiments is available at \url{https://github.com/Naokikiki/Quasi-Uniform-Points}.

\subsection{Empirical properties of two constructions}\label{subsec:empirical}

\subsubsection{Construction parameters and the number of points}
We first examine how the choice of parameters and the number of points are related in each construction. For the explicit rank-1 lattice, the sequence of the number of points $N$ is determined by the recursive algorithm (\cref{line3} in \cref{alg:explicit}) and depends crucially on the Diophantine property of the vector $\bsalpha$ (in our case, the choice of the prime parameter $p$). \Cref{tab:explicit_n_2d} compares the number of points (listing the first 18 terms of the infinite sequence) obtained for $d=2$ using different primes $p \in \{2,3,5,7\}$, alongside the Fibonacci sequence, which corresponds to the classical Fibonacci lattice. The Fibonacci lattice exhibits a highly regular growth pattern, with the ratio $N_i/N_{i-1}$ converging to the golden ratio $\varphi \approx 1.618$. The proposed explicit construction shows more variable growth rates, with ratios ranging from approximately $1.1$ to over $10$ depending on $p$. We observe that this variability becomes more pronounced for larger values of $p$, as indicated by the increasing maximum ratios shown in \cref{tab:explicit_n_2d}. However, it is noteworthy that the median remains around $2$ across different values of $p$, indicating that the number of points typically doubles at each step.

\Cref{tab:explicit_n_d} illustrates the impact of dimensionality $d\in \{2,3,5,7\}$ on the sequence of the number of points for our base choice $p=2$. As the dimension $d$ increases, the algebraic relationships between the components of $\bsalpha$ become more complicated, leading to larger fluctuations in the consecutive ratios $N_i/N_{i-1}$. Nevertheless, the median ratio remains around $2$ across different values of $d$, indicating that, in the typical case, the number of points roughly doubles at each step, consistent with the behavior observed for varying $p$.

In contrast, the Korobov lattice construction offers the flexibility to choose any prime number as the number of points $N$. For this approach, our focus shifts to finding the optimal generating vector for a fixed $N$. \Cref{tab:korobov_optimal_a} shows the optimal parameter $a^*$ selected for each pair $(N,d)$ by minimizing the upper bound of the mesh ratio $d\sqrt{d}/(\lambda_1(\Lambda)\lambda_1(\Lambda^{\perp}))$ over the candidate set $a \in \{1,\ldots,N-1\}$. In this table, we select prime values for $N$ that are close to powers of 2. These optimal parameters are used throughout the subsequent experiments.

\begin{table}
    \footnotesize
    \centering
    \begin{tabular}{c|ccccc}
       Index $i$  & Fibonacci & $p=2$ & $p=3$ & $p=5$ & $p=7$ \\ \hline
       $1$  & $1$ & $1$ & $1$  & $1$  & $1$ \\
       $2$  & $2$ & $3$ & $2$ & $3$ & $2$ \\
       $3$  & $3$ & $7$ & $11$ & $11$ & $3$ \\
       $4$  & $5$ & $12$ & $25$ & $14$ & $9$ \\
       $5$  & $8$ & $46$ & $113$ & $131$ & $12$ \\
       $6$  & $13$ & $177$ & $312$ & $224$ & $35$ \\
       $7$  & $21$ & $681$ & $1411$ & $500$ & $126$ \\
       $8$  & $34$ & $858$ & $2485$ & $1224$ & $138$ \\
       $9$  & $55$ & $2620$ & $3896$ & $1724$ & $1447$ \\
       $10$  & $89$ & $5921$ & $9515$ & $13161$ & $1585$ \\
       $11$  & $144$ & $10080$ & $21515$ & $14385$ & $16619$ \\
       $12$  & $233$ & $22780$ & $48649$ & $16109$ & $18204$ \\
       $13$  & $377$ & $38781$ & $70164$ & $45379$ & $190872$ \\
       $14$  & $610$ & $149203$ & $268656$ & $212010$ & $209076$ \\
       $15$  & $987$ & $574032$ & $607476$ & $574542$ & $1792249$ \\
       $16$  & $1597$ & $723235$ & $3354685$ & $1406473$ & $2192197$ \\
       $17$  & $2584$ & $2208486$ & $7585502$ & $1981015$ & $2401273$ \\
       $18$  & $4181$ & $2782518$ & $41889671$ & $5580513$ & $4593470$ \\ \hline \hline
       $\min N_i/N_{i-1}$ & $1.5000$ & $1.2599$ & $1.4422$ & $1.0930$ & $1.0952$ \\
       $\mathrm{median}\, N_i/N_{i-1}$ & $1.6180$ & $2.3333$ & $2.2727$ & $2.4480$ & $1.9129$ \\
       $\max N_i/N_{i-1}$ & $2$ & $3.8478$ & $5.5223$ & $9.3571$ & $10.4855$ \\
    \end{tabular}
    \caption{Sequence of the number of points for $d=2$: Comparison between the explicit construction with different primes $p$ and the Fibonacci sequence.}
    \label{tab:explicit_n_2d}
\end{table}

\begin{table}
    \footnotesize
    \centering
    \begin{tabular}{c|ccccc}
       Index $i$  & $d=2$ & $d=3$ & $d=5$ & $d=7$ \\ \hline
       $1$  & $1$ & $1$ & $1$  & $1$ \\
       $2$  & $3$ & $2$ & $3$ & $2$ \\
       $3$  & $7$ & $4$ & $8$ & $4$ \\
       $4$  & $12$ & $7$ & $15$ & $26$ \\
       $5$  & $46$ & $9$ & $24$ & $31$ \\
       $6$  & $177$ & $10$ & $31$ & $343$ \\
       $7$  & $681$ & $22$ & $65$ & $1813$ \\
       $8$  & $858$ & $31$ & $138$ & $1977$ \\
       $9$  & $2620$ & $53$ & $531$ & $2156$ \\
       $10$  & $5921$ & $63$ & $596$ & $3790$ \\
       $11$  & $10080$ & $116$ & $669$ & $20031$ \\
       $12$  & $22780$ & $333$ & $6883$ & $23821$ \\
       $13$  & $38781$ & $1480$ & $8730$ & $74744$ \\
       $14$  & $149203$ & $1760$ & $9326$ & $221318$ \\
       $15$  & $574032$ & $3969$ & $9995$ & $241349$ \\
       $16$  & $723235$ & $5729$ & $13662$ & $267326$ \\
       $17$  & $2208486$ & $7822$ & $14862$ & $2929805$ \\
       $18$  & $2782518$ & $13155$ & $31544$ & $2953626$ \\ \hline \hline
       $\min N_i/N_{i-1}$ & $1.2599$ & $1.1111$ & $1.0683$ & $1.0081$ \\
       $\mathrm{median}\, N_i/N_{i-1}$ & $2.3333$ & $1.7097$ & $1.6000$ & $2.0000$ \\
       $\max N_i/N_{i-1}$ & $3.8478$ & $4.4444$ & $10.2885$ & $11.0645$ \\
    \end{tabular}
    \caption{Sequence of the number of points for $p=2$: Comparison across different dimensions $d$.}
    \label{tab:explicit_n_d}
\end{table}

\begin{table}
    \footnotesize
    \centering
    \begin{tabular}{c|ccccc}
       Number of points $N$  & $d=2$ & $d=3$ & $d=5$ & $d=7$ \\ \hline
       $3$ & $1$ & $2$ & $1$ & $1$ \\
        $7$ & $3$ & $2$ & $2$ & $2$ \\
        $13$ & $8$ & $2$ & $2$ & $2$ \\
        $31$ & $12$ & $17$ & $2$ & $24$ \\
        $61$ & $53$ & $15$ & $49$ & $16$ \\
        $127$ & $115$ & $102$ & $82$ & $11$ \\
        $251$ & $177$ & $106$ & $124$ & $192$ \\
        $509$ & $376$ & $225$ & $20$ & $354$ \\
        $1021$ & $798$ & $516$ & $916$ & $461$ \\
        $2039$ & $1062$ & $131$ & $1939$ & $1140$ \\
        $4093$ & $1127$ & $3506$ & $742$ & $2091$ \\
        $8191$ & $6725$ & $5605$ & $7349$ & $3457$ \\
       $16381$  & $6259$ & $11599$ & $932$ & $7000$ \\
       $32749$  & $29342$ & $26709$ & $2534$ & $24511$ \\
       $65521$  & $6385$ & $45077$ & $28529$ & $37400$ \\
       $131071$  & $80509$ & $107018$ & $36684$ & $23796$ \\
       $262139$  & $193187$ & $197360$ & $104738$ & $172874$ \\
       $524287$  & $105982$ & $341893$ & $307592$ & $58346$ \\
    \end{tabular}
    \caption{Optimal generating parameters $a^*$ for Korobov lattice point sets for selected primes $N$ and dimensions $d$.}
    \label{tab:korobov_optimal_a}
\end{table}

\subsubsection{Bound on mesh ratio}

We investigate the behavior of the mesh ratio as the number of points increases. Although the theoretical results in the previous section show that the mesh ratio of both constructions is uniformly bounded with respect to $N$, it is instructive to examine the constant factors involved. Since evaluating the mesh ratio exactly is computationally prohibitive, in \cref{fig:mesh_ratio_bounds}, we numerically 
show how the theoretical upper bound of the mesh ratio, given by $
d\sqrt{d}/(\lambda_1(\Lambda)\lambda_1(\Lambda^{\perp}))$, behaves as a function of the number of points $N$ (with the horizontal axis on a logarithmic scale) for dimensions $d\in \{2,3,5,7\}$.

For the numerical evaluation of the explicit lattice construction, we note a technical detail regarding the basis construction. Although \cref{alg:explicit} ensures that the components are collectively coprime to $N$, i.e., $\gcd(N,z_1,\ldots,z_d)=1$, as discussed in \cref{subsec:explicit}, there is no guarantee that there exists a single index $j$ such that $\gcd(N,z_j)=1$. In such cases, the generating matrix cannot be represented in the standard form \eqref{eq:generating_matrix}. To compute the shortest non-zero vector length $\lambda_1(\Lambda)$ via LLL reduction, we instead construct the basis as follows: we start with the $d \times (d+1)$ matrix $[\bsz^{\top} \mid N\mathbf{I}_d]$ and apply unimodular column operations to reduce it to a $d\times d$ matrix. The resulting matrix is then scaled by $1/N$ to obtain a basis for $\Lambda$ to which the LLL reduction is applied.

We observe that the Korobov construction generally exhibits more stable and lower bounds across different values of $N$ and $d$, benefiting from the optimization of the parameter $a$. In contrast, the explicit construction shows greater variability and tends to yield larger bound values. Note that a lower upper bound does not strictly imply a smaller actual mesh ratio, but it does provide a tighter theoretical guarantee. Furthermore, as expected, the magnitude of these bounds increases with the dimension $d$, primarily due to the growth of the factor $d\sqrt{d}$ in the numerator. 

\begin{figure}[t]
    \centering

    \begin{minipage}[b]{0.45\textwidth}
    \centering
    \includegraphics[width=\linewidth]{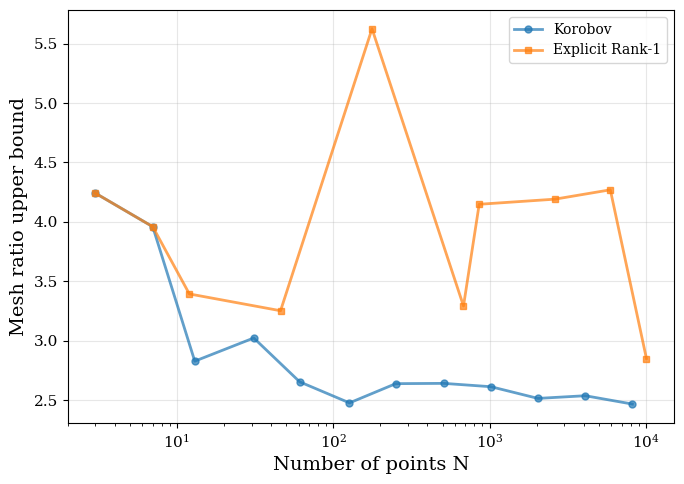}
    \subcaption{$d=2$}
    \label{fig:bound_2d}
    \end{minipage}
    \begin{minipage}[b]{0.45\textwidth}
    \centering
    \includegraphics[width=\linewidth]{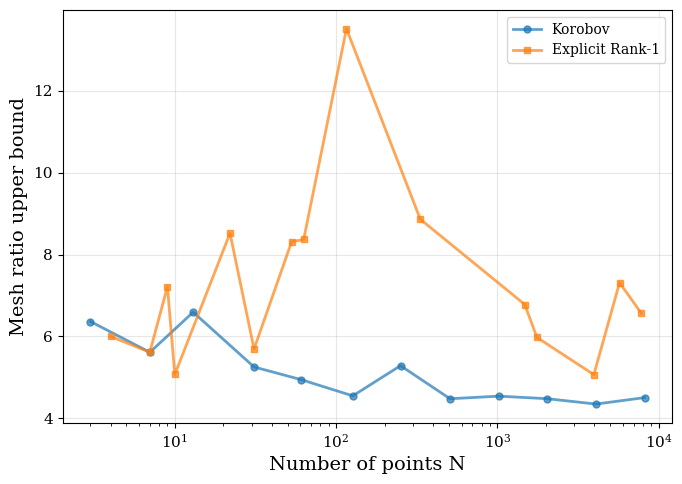}
    \subcaption{$d=3$}
    \label{fig:bound_3d}
    \end{minipage}

    \begin{minipage}[b]{0.45\textwidth}
    \centering
    \includegraphics[width=\linewidth]{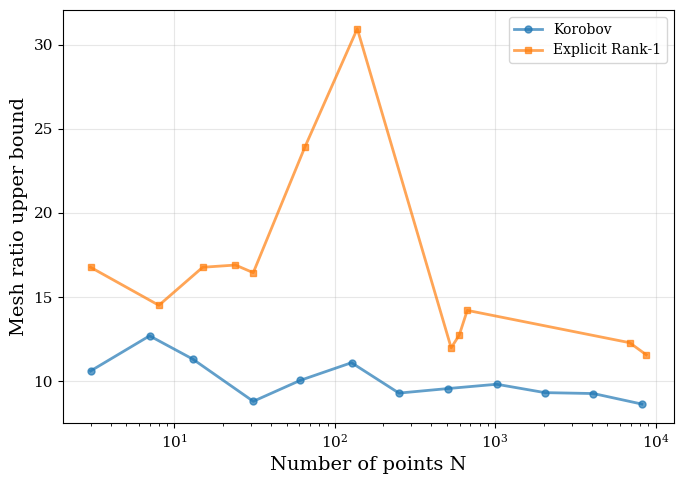}
    \subcaption{$d=5$}
    \label{fig:bound_5d}
    \end{minipage}
    \begin{minipage}[b]{0.45\textwidth}
    \centering
    \includegraphics[width=\linewidth]{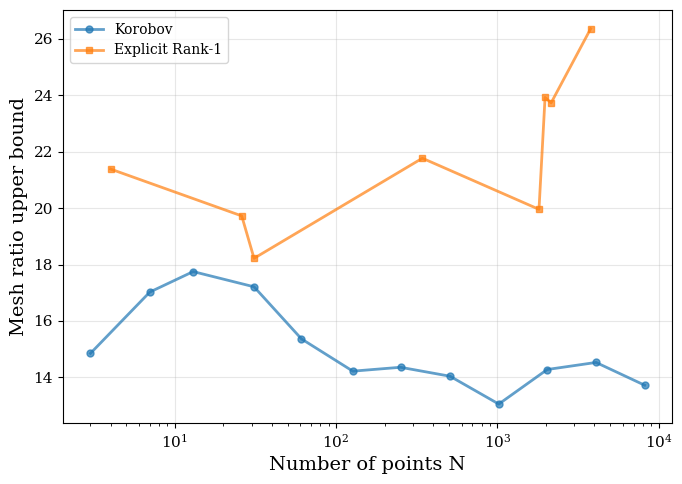}
    \subcaption{$d=7$}
    \label{fig:bound_7d}
    \end{minipage}
    \caption{The upper bound of the mesh ratio as a function of the number of points $N$ for various dimensions.}
    \label{fig:mesh_ratio_bounds}
\end{figure}

\subsubsection{Separation radius}

Quasi-uniform sequences are characterized by the optimal decay rate $\Theta(N^{-1/d})$ with respect to both the covering and separation radii. While standard low-discrepancy sequences, such as Halton and Sobol' sequences, are known to achieve the optimal decay rate for the covering radius (see, for instance, \cite[Chapter~6]{N92}), they typically fail to satisfy the optimality for the separation radius (see \cite{G24a} for the two-dimensional Sobol' sequence and \cite{GHS25} for the Halton sequence). Consequently, comparing the covering radius would primarily reveal differences in constant factors. In contrast, the separation radius provides a fundamental distinction between quasi-uniform point sets and those low-discrepancy sequences. Furthermore, while computing the exact covering radius of a given point set is computationally demanding, the separation radius can be exactly evaluated by computing the minimum of all pairwise distances. For these reasons, we numerically examine the behavior of the separation radius to verify the quasi-uniformity of our constructions, and compare them against Halton and Sobol' sequences as well as randomly sampled points.

The number of points used in the experiments depends on the type of point set. Because Sobol' sequences perform optimally when the number of points is a power of two, we likewise choose powers of two for the Halton sequence and for the randomly sampled points.
For the Korobov construction, we use the number of points and the corresponding parameters from \cref{tab:korobov_optimal_a}, which are prime numbers close to powers of two. The number of points for the explicit rank-1 lattice construction is determined by the construction algorithm, depending on the choice of $\bsalpha$, and is therefore not controllable. All experiments are conducted in dimensions $d = 2, 3, 5$, and $7$.

\Cref{fig:separation_radius} displays the decay of the separation radius $q(P_N)$ as a function of the number of points $N$ on a log-log scale. The results clearly show that both our explicit rank-1 lattice and Korobov lattice follow the theoretical decay rate $\Theta(N^{-1/d})$ (indicated by the slope of the reference lines) across all tested dimensions. In contrast, the Halton and Sobol' sequences, as well as random points, exhibit a significantly faster decay, indicating that the minimum distance between points shrinks much more rapidly than the optimal rate. This empirically confirms that, unlike the proposed lattice constructions, these standard sequences do not possess the quasi-uniform property.

\begin{figure}[t]
    \centering

    \begin{minipage}[b]{0.45\textwidth}
    \centering
    \includegraphics[width=\linewidth]{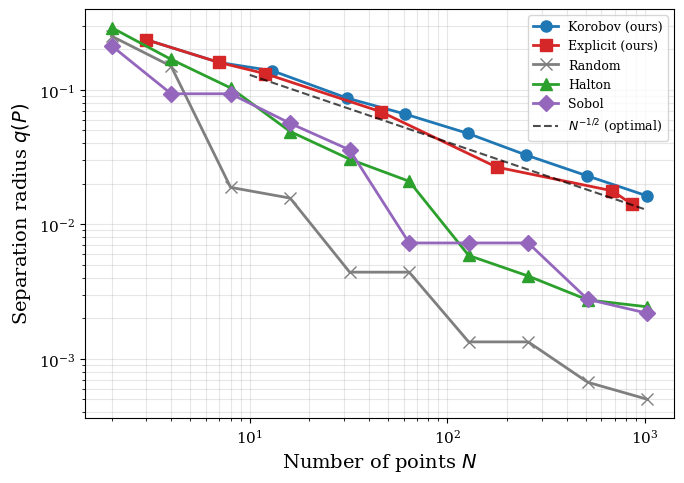}
    \subcaption{$d=2$}
    \label{fig:separation_2d}
    \end{minipage}
    \begin{minipage}[b]{0.45\textwidth}
    \centering
    \includegraphics[width=\linewidth]{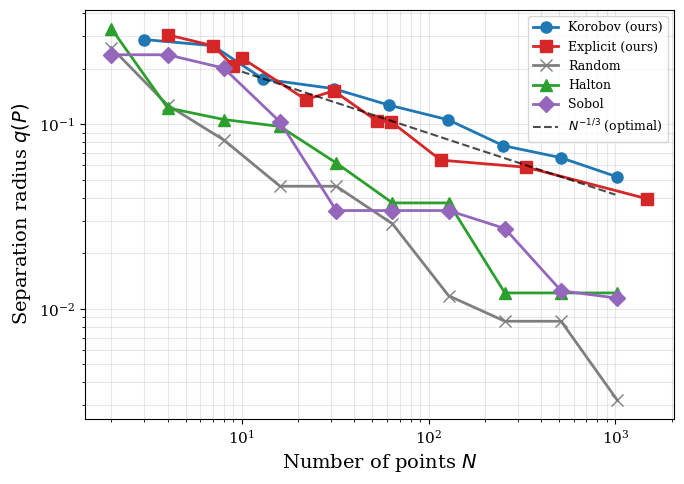}
    \subcaption{$d=3$}
    \label{fig:separation_3d}
    \end{minipage}

    \begin{minipage}[b]{0.45\textwidth}
    \centering
    \includegraphics[width=\linewidth]{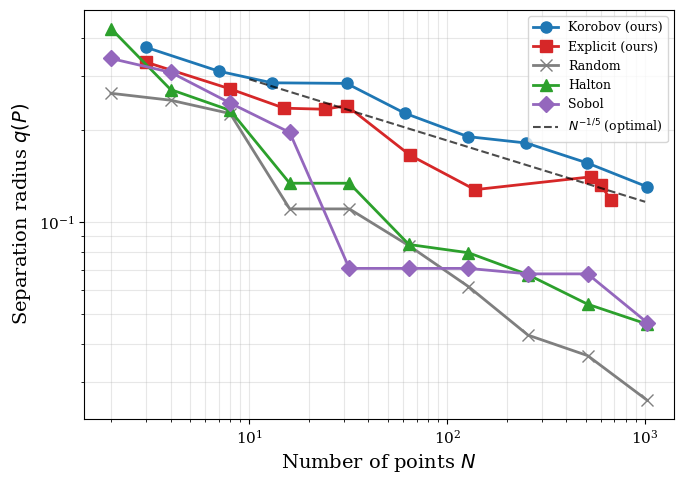}
    \subcaption{$d=5$}
    \label{fig:separation_5d}
    \end{minipage}
    \begin{minipage}[b]{0.45\textwidth}
    \centering
    \includegraphics[width=\linewidth]{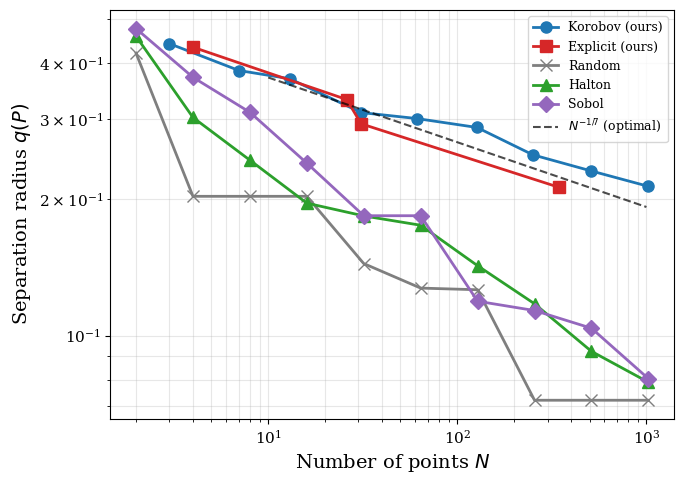}
    \subcaption{$d=7$}
    \label{fig:separation_7d}
    \end{minipage}

    \caption{Decay of the separation radius $q(P_N)$ as a function of the number of points $N$ for various dimensions. The proposed lattice sets maintain the optimal rate $\Theta(N^{-1/d})$, whereas other sequences show faster decay due to local clustering.}
    \label{fig:separation_radius}
\end{figure}

\subsection{Gaussian process regression}\label{subsec:GPR}

We evaluate the effectiveness of our point sets as training data for Gaussian process regression (GPR) \cite{RW06}. GPR is a widely used non-parametric method for function approximation and surrogate modeling, where the quality of predictions depends critically on the spatial distribution of training points. Since GPR prediction uncertainty grows in regions far from training points, minimizing the covering radius is essential for ensuring high approximation accuracy. Simultaneously, a large separation radius is required to avoid numerical instability (ill-conditioning) of the covariance matrix. Quasi-uniform point sets are particularly well-suited for this purpose \cite{T20,WBG21}.

\subsubsection{Experimental setup}

We compare the same five types of point sets as in the previous subsection: the Korobov lattice, the explicit rank-1 lattice, the Halton sequence, the Sobol' sequence, and randomly sampled points. For the GPR model, we use a Mat\'{e}rn kernel with smoothness parameter $\nu = 2.5$. We employ this fixed smoothness parameter across all test functions, regardless of their actual regularity, to assess the robustness of the designs under potential model misspecification. The kernel is defined as:
\begin{align*}
    k(\bsx, \bsy) = \sigma^2 \left(1 + \frac{\sqrt{5}\|\bsx - \bsy\|_2}{\ell} + \frac{5\|\bsx - \bsy\|_2^2}{3\ell^2}\right) \exp\left(-\frac{\sqrt{5}\|\bsx - \bsy\|_2}{\ell}\right),
\end{align*}
where the length scale $\ell$ and variance $\sigma^2$ are optimized by maximizing the marginal likelihood. To simulate realistic scenarios with measurement error, we add independent Gaussian noise with standard deviation $\sigma_{\mathrm{noise}} = 0.05$ to the training observations.

The approximation quality is measured by the $L^2$ error:
\begin{align*}
    \|f - \widehat{f}_{\mathrm{GPR}}\|_{L^2([0,1]^d)} = \left( \int_{[0,1]^d} |f(\bsx) - \widehat{f}_{\mathrm{GPR}}(\bsx)|^2 \, \mathrm{d}\bsx \right)^{1/2},
\end{align*}
where $f$ is the true function and $\widehat{f}_{\mathrm{GPR}}$ is the GPR posterior mean function. For $d = 2$, we compute this integral using tensor-product Gauss--Legendre quadrature with $50$ nodes per dimension. For $d \in \{3,5,7\}$, we employ Monte Carlo integration with $10{,}000$ random samples. All experiments are repeated over $100$ independent noise realizations, and we report the mean error along with the standard deviation.

\subsubsection{Test functions}

For the two-dimensional case, we use the Franke function~\cite{F07,F82}, a standard benchmark for scattered data interpolation:
\begin{align*}
    f(x, y) &= \frac{3}{4} \exp\left(-\frac{(9x-2)^2 + (9y-2)^2}{4}\right) + \frac{3}{4} \exp\left(-\frac{(9x+1)^2}{49} - \frac{9y+1}{10}\right) \\
    &\quad + \frac{1}{2} \exp\left(-\frac{(9x-7)^2 + (9y-3)^2}{4}\right) - \frac{1}{5} \exp\left(-(9x-4)^2 - (9y-7)^2\right).
\end{align*}
In addition, we evaluate three test functions adapted from benchmarks in \cite{LF03}. Since the original functions were defined on the unit disk centered at the origin, we map the domain to the unit square $[0,1]^2$ via the coordinate transformations $x \mapsto 2x-1$ and $y \mapsto 2y-1$. The resulting expressions are:
\begin{itemize}
    \item Arctan Function: $f(x, y) = \arctan\bigl(2(2x + 6y - 5)\bigr) / \arctan\bigl(2(\sqrt{10} + 1)\bigr)$.
    \item Gaussian Peak: $f(x, y) = \exp\left(-(2x-1.1)^2 - 0.5(2y-1)^2\right)$.
    \item Oscillatory Function: $f(x, y) = \sin\left(\pi\left((2x-1)^2 + (2y-1)^2\right)\right)$.
\end{itemize}

For the three-dimensional case, we use two functions from the Genz family~\cite{G84}, which are widely used benchmarks for numerical integration and surrogate modeling:
\begin{itemize}
    \item Genz Continuous: $f(\bsx) = \exp\left(-\sum_{i=1}^{d} c_i |x_i - 0.5|\right)$ with $\boldsymbol{c} = (1.5, 2.0, 2.5)^\top$.
    \item Genz Gaussian: $f(\bsx) = \exp\left(-\sum_{i=1}^{d} c_i^2 (x_i - 0.5)^2\right)$ with $\boldsymbol{c} = (3.0, 3.0, 3.0)^\top$.
\end{itemize}

For dimensions $d\in \{5,7\}$, we employ the pairwise trigonometric function:
\begin{align*}
    f(\bsx) = \frac{1}{d(d-1)/2} \sum_{1 \leq i < j \leq d} \sin(\pi(x_i + x_j)).
\end{align*}
This function consists of $\binom{d}{2}$ pairwise interaction terms ($10$ terms for $d=5$ and $21$ terms for $d=7$), normalized by the total number of pairs. This function emphasizes pairwise interactions and is sensitive to uniformity in two-dimensional projections.

\subsubsection{Results}

\paragraph{Two-dimensional results}
\Cref{fig:GPR_2d} presents the $L^2$ error as a function of the number of training points $N$ for the four two-dimensional test functions on a log-log scale. The results exhibit a consistent trend: the $L^2$ error decreases monotonically as $N$ increases. Notably, for sufficiently large sample sizes ($N \ge 100$), our two lattice constructions (Korobov and explicit rank-1) outperform the other point sets, demonstrating superior convergence behaviors.

\begin{figure}[t]
    \centering

    \begin{minipage}[b]{0.45\textwidth}
    \centering
    \includegraphics[width=\linewidth]{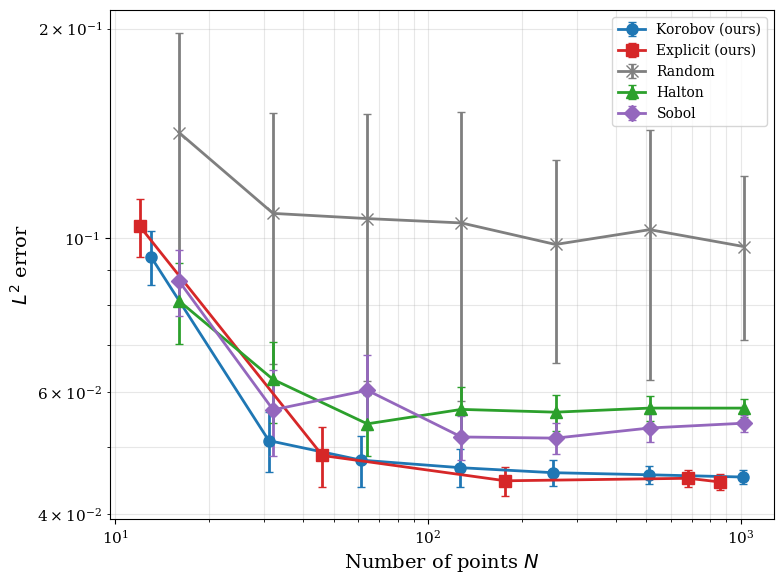}
    \subcaption{Franke}
    \label{fig:GPR_2d_Franke}
    \end{minipage}
    \begin{minipage}[b]{0.45\textwidth}
    \centering
    \includegraphics[width=\linewidth]{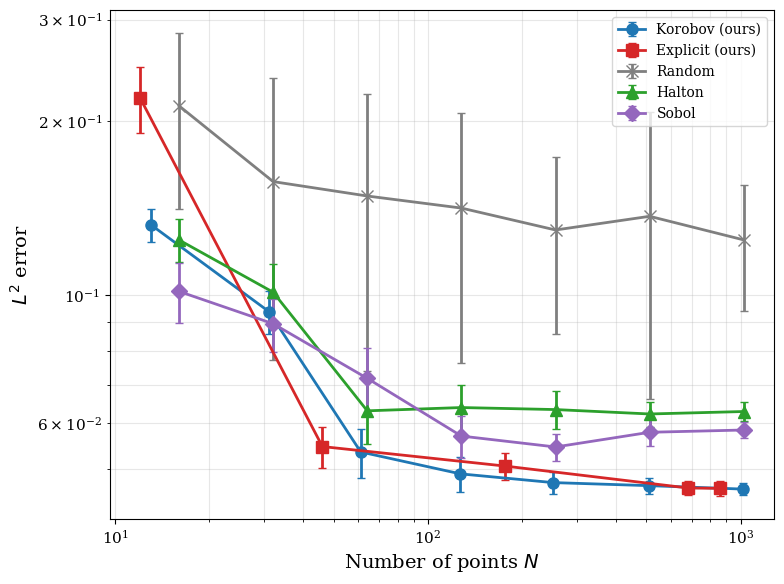}
    \subcaption{Arctan}
    \label{fig:GPR_2d_arctan}
    \end{minipage}

    \begin{minipage}[b]{0.45\textwidth}
    \centering
    \includegraphics[width=\linewidth]{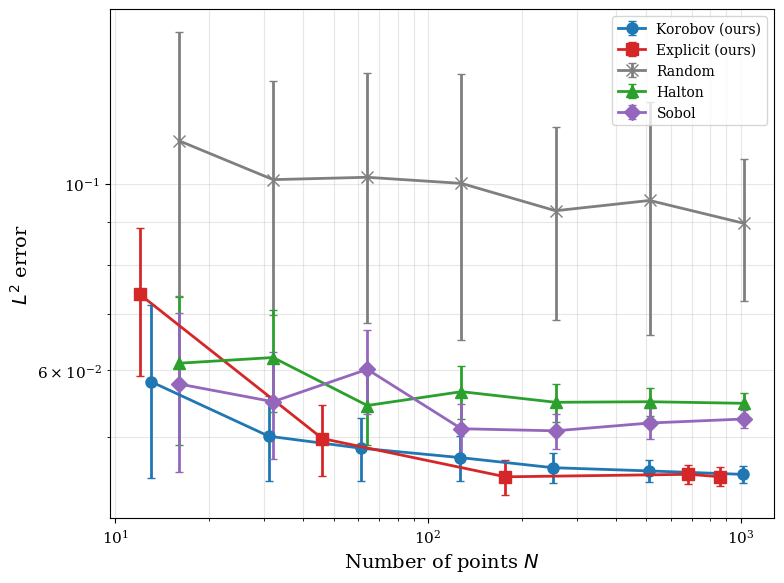}
    \subcaption{Gaussian Peak}
    \label{fig:GPR_2d_Gaussian}
    \end{minipage}
    \begin{minipage}[b]{0.45\textwidth}
    \centering
    \includegraphics[width=\linewidth]{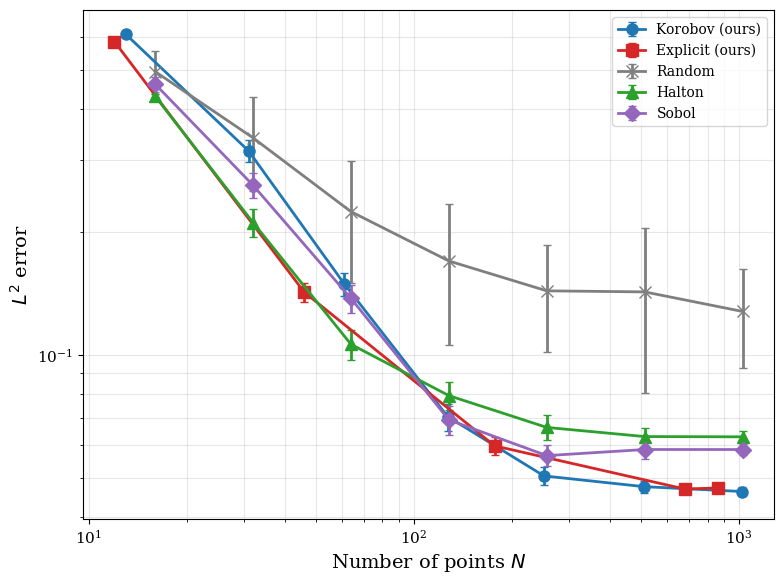}
    \subcaption{Oscillatory}
    \label{fig:GPR_2d_oscillatory}
    \end{minipage}

    \caption{GPR approximation error for the four test functions in $d=2$. The plots show the mean $L^2$ error over $100$ trials with observation noise $\sigma_{\mathrm{noise}} = 0.05$ on a log-log scale. Error bars indicate $\pm 1$ standard deviation.}
    \label{fig:GPR_2d}
\end{figure}

\paragraph{Three-dimensional results}
\Cref{fig:GPR_3d} displays the $L^2$ error trajectories for the two Genz test functions in three dimensions. Although the advantages of the lattice constructions are not as pronounced as in the two-dimensional case, they asymptotically yield smaller mean $L^2$ errors, effectively outperforming the other point sets.

\begin{figure}[t]
    \centering
    \begin{minipage}[b]{0.45\textwidth}
        \centering
        \includegraphics[width=\linewidth]{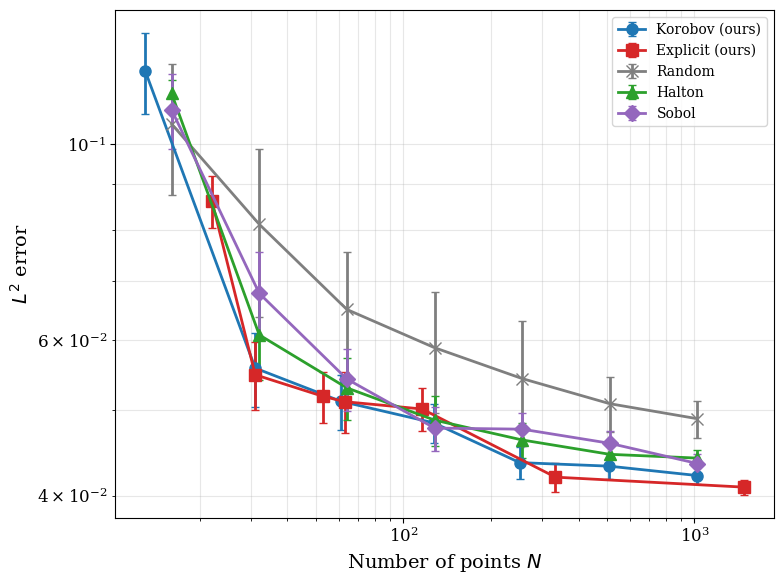}
        \subcaption{Genz Continuous}
        \label{fig:GPR_3d_continuous}
    \end{minipage}
    \begin{minipage}[b]{0.45\textwidth}
        \centering
        \includegraphics[width=\linewidth]{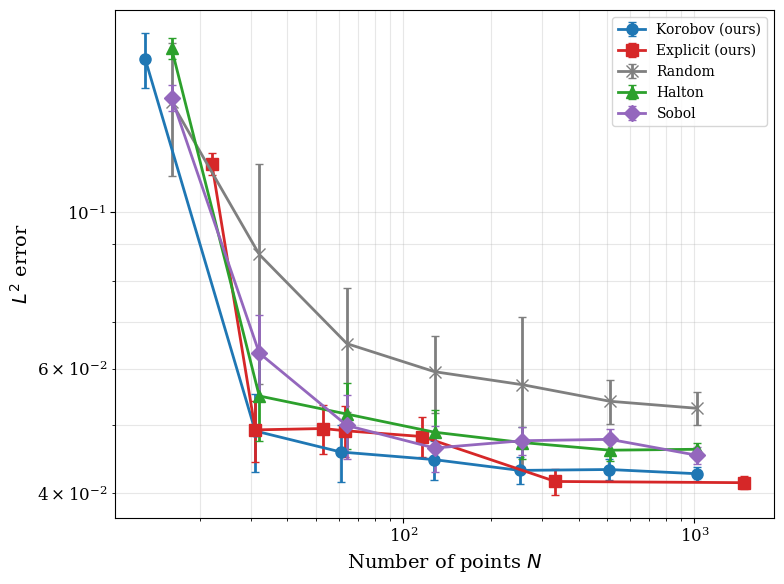}
        \subcaption{Genz Gaussian}
        \label{fig:GPR_3d_Gaussian}
    \end{minipage}

    \caption{GPR approximation error for the two test functions in $d=3$. The plots show the mean $L^2$ error over $100$ trials with observation noise $\sigma_{\mathrm{noise}} = 0.05$ on a log-log scale. Error bars indicate $\pm 1$ standard deviation.}
    \label{fig:GPR_3d}
\end{figure}

\paragraph{Higher-dimensional results}
\Cref{fig:GPR_high-dim} presents the results for the pairwise trigonometric function in dimensions $d = 5$ and $d = 7$. In these settings, our lattice constructions do not exhibit a distinct advantage over the other point sets within the considered range of sample sizes, although a marginal performance gain is observed for the lattice methods at the largest sample size in $d = 5$. 

This lack of significant differentiation is likely attributable to the slow decay of the covering radius in higher dimensions, which scales as $N^{-1/d}$. Due to this slow convergence rate, the sample sizes used in our experiments likely correspond to the pre-asymptotic regime, where the geometric advantages of quasi-uniform designs are not yet dominant. Unfortunately, extending the experimental range to the asymptotic regime is practically limited by the computational complexity of fitting GPR models, which scales cubically with the number of training points. Nevertheless, as demonstrated in \cref{subsec:empirical}, the separation radius of our lattice constructions achieves the optimal rate, whereas the other point sets do not. Thus, even in high-dimensional settings where approximation performance is comparable, the advantage of our lattice designs in terms of numerical stability should still hold.

Additionally, we observe a highly peculiar convergence behavior for the explicit rank-1 lattice in the $d=5$ case. This anomaly appears to be directly correlated with the geometric properties of the generated point sets. That is, in the region where the error is unusually large (around $N \approx 100$), the bound on the mesh ratio shown in \cref{fig:bound_5d} also exhibits large values. This suggests that the degradation in quasi-uniformity directly negatively impacts the approximation performance.

A closer inspection reveals structural issues in specific two-dimensional projections. For instance, when $N=138$, the resulting generating vector is $\bsz=(17,36,57,81,108)$. Considering the projection onto the second and fifth dimensions, we find that $\gcd(N, z_2, z_5) = \gcd(138, 36, 108) = 6$. Consequently, the lattice collapses onto only $138/6 = 23$ distinct points in this two-dimensional projection, causing significant overlaps. Even in cases where strict overlaps do not occur (i.e., the points remain distinct), we frequently observe that the points align on a small number of parallel lines in two-dimensional projections. This alignment leaves large regions of the domain unsampled, creating significant gaps. Since the pairwise trigonometric test function is composed of a sum of two-dimensional interaction terms, it is particularly sensitive to the quality of these projections. Whether due to overlaps or poor distribution on a few hyperplanes, the lattice fails to capture the variation of the function in specific subspaces, leading to the observed spikes in error.

This finding highlights a potential limitation of using fixed parameters; thus, optimizing the parameter $\bsalpha$ in the explicit rank-1 construction to avoid such degenerate projections is a promising direction for future research.

\begin{figure}[t]
    \centering
    \begin{minipage}[b]{0.45\textwidth}
        \centering
        \includegraphics[width=\linewidth]{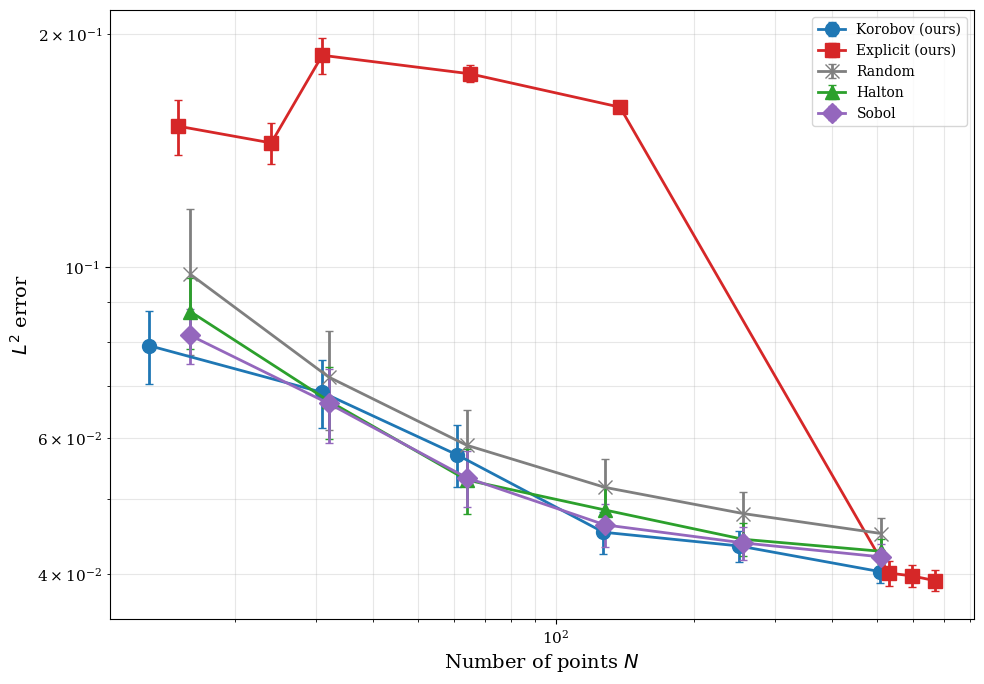}
        \subcaption{$d=5$}
        \label{fig:GPR_5d}
    \end{minipage}
    \begin{minipage}[b]{0.45\textwidth}
        \centering
        \includegraphics[width=\linewidth]{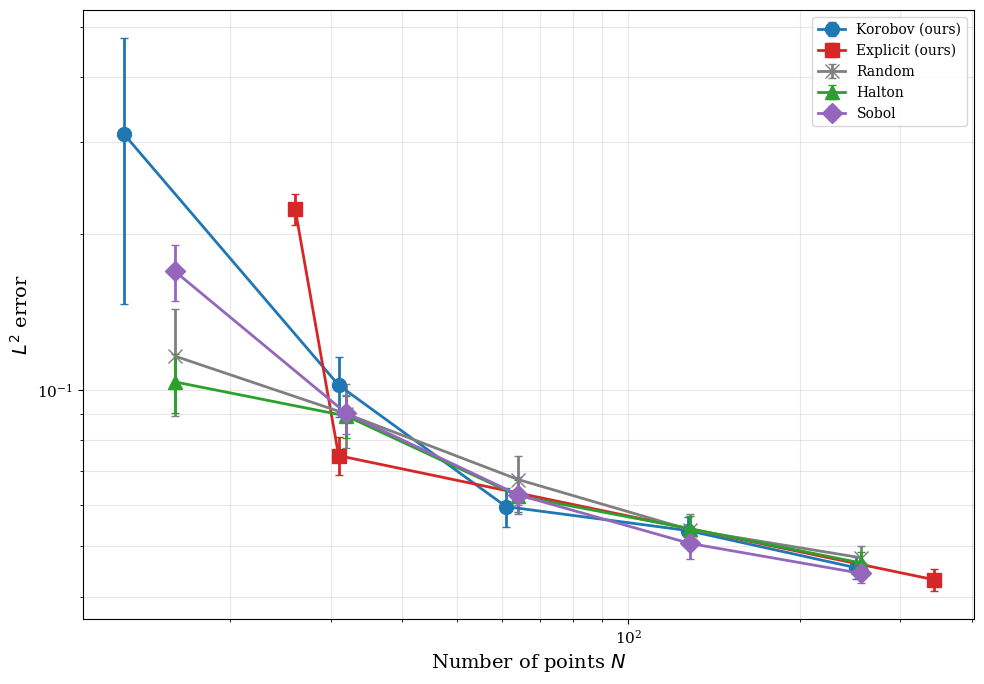}
        \subcaption{$d=7$}
        \label{fig:GPR_7d}
    \end{minipage}

    \caption{GPR approximation error for the pairwise trigonometric function in $d=5$ and $d=7$. The plots show the mean $L^2$ error over $100$ trials with observation noise $\sigma_{\mathrm{noise}} = 0.05$ on a log-log scale. Error bars indicate $\pm 1$ standard deviation.}
    \label{fig:GPR_high-dim}
\end{figure}

\section{Concluding remarks}\label{sec:conclusion}

In this paper, we have investigated the space-filling properties of rank-1 lattice point sets from the perspective of the geometry of numbers. While these point sets have often been analyzed in terms of their discrepancy properties, our present work highlights their utility as robust designs for computer experiments and scattered data approximation.

We proposed two construction algorithms: an explicit algebraic construction based on Diophantine approximation, and a greedy optimization of the Korobov lattice parameter using the LLL algorithm. Our theoretical analysis established that both constructions yield quasi-uniform families of point sets over the multi-dimensional unit cube.

Numerical experiments verified these theoretical findings. Empirically, the LLL-based construction achieves a smaller bound on the mesh ratio, demonstrating the benefit of optimizing the Korobov parameter, albeit with a computational cost that scales polynomially with the dimension $d$ and quasi-linearly with the number of points $N$. Crucially, our lattice designs maintain a separation radius of $\Theta(N^{-1/d})$. In contrast, random point sets and popular low-discrepancy sequences (such as Sobol' and Halton) exhibit a much faster decay in separation radius. This property makes rank-1 lattices particularly advantageous for kernel-based approximation methods, as it mitigates the severe ill-conditioning of kernel matrices possibly encountered with other point sets.

In terms of approximation performance in Gaussian process regression, our designs demonstrated a clear advantage in lower-dimensional settings ($d=2, 3$). In higher dimensions ($d=5, 7$), while the advantages in approximation error were less pronounced due to the limitations of sample size in the pre-asymptotic regime, the theoretical guarantee of numerical stability remains a key merit. Furthermore, our analysis of the explicit rank-1 construction in five dimensions uncovered a specific vulnerability: fixed algebraic parameters can lead to degenerate two-dimensional projections where points overlap or align on a few hyperplanes. This finding suggests an important direction for future research: optimizing the algebraic parameter $\bsalpha$ to maximize the quality of low-dimensional projections, which would enhance the robustness of the construction against structural anomalies.

\bibliographystyle{siamplain}
\bibliography{ref}

@article {B93,
    AUTHOR = {Banaszczyk, W.},
     TITLE = {New bounds in some transference theorems in the geometry of
              numbers},
   JOURNAL = {Math. Ann.},
  FJOURNAL = {Mathematische Annalen},
    VOLUME = {296},
      YEAR = {1993},
    NUMBER = {4},
     PAGES = {625--635},
       DOI = {10.1007/BF01445125},
}

@book {C97,
    AUTHOR = {Cassels, J. W. S.},
     TITLE = {An Introduction to the Geometry of Numbers},
    SERIES = {Classics in Mathematics},
      NOTE = {Corrected reprint of the 1971 edition},
 PUBLISHER = {Springer-Verlag, Berlin},
      YEAR = {1997},
     PAGES = {viii+344},
}

@article {DGS25,
    AUTHOR = {J. Dick and T. Goda and K. Suzuki},
     TITLE = {On the quasi-uniformity properties of quasi-{M}onte {C}arlo point sets and sequences -- {P}art~{II}: {D}igital nets and sequences},
   JOURNAL = {arXiv:2501.18226},
   URL = {https://arxiv.org/abs/2501.18226},
   YEAR = {2025}
}

@article {DGLPS25,
    AUTHOR = {J. Dick and T. Goda and G. Larcher and F. Pillichshammer and K. Suzuki},
     TITLE = {On the quasi-uniformity properties of quasi-{M}onte {C}arlo point sets and sequences -- {P}art~{I}: {L}attices and {K}ronecker sequences},
   JOURNAL = {arXiv:2502.06202},
   URL = {https://arxiv.org/abs/2502.06202},
   YEAR = {2025}
}

@book {DKP22,
    AUTHOR = {Dick, J. and Kritzer, P. and Pillichshammer, F.},
     TITLE = {Lattice Rules---Numerical Integration, Approximation, and
              Discrepancy},
    SERIES = {Springer Series in Computational Mathematics},
    VOLUME = {58},
 PUBLISHER = {Springer, Cham},
      YEAR = {2022},
     PAGES = {xvi+580},
       DOI = {10.1007/978-3-031-09951-9},
}

@article {DKS13,
    AUTHOR = {Dick, J. and Kuo, F.~Y. and Sloan, I.~H.},
     TITLE = {High-dimensional integration: the quasi-{M}onte {C}arlo way},
   JOURNAL = {Acta Numer.},
  FJOURNAL = {Acta Numerica},
    VOLUME = {22},
      YEAR = {2013},
     PAGES = {133--288},
       DOI = {10.1017/S0962492913000044},
}

@article {G24a,
    AUTHOR = {Goda, T.},
     TITLE = {The {S}obol' sequence is not quasi-uniform in dimension 2},
   JOURNAL = {Proc. Amer. Math. Soc.},
  FJOURNAL = {Proceedings of the American Mathematical Society},
    VOLUME = {152},
      YEAR = {2024},
    NUMBER = {8},
     PAGES = {3209--3213},
}

@article {G24b,
    AUTHOR = {Goda, T.},
     TITLE = {One-dimensional quasi-uniform {K}ronecker sequences},
   JOURNAL = {Arch. Math.},
  FJOURNAL = {Archiv der Mathematik},
    VOLUME = {123},
      YEAR = {2024},
    NUMBER = {5},
     PAGES = {499--505},
}

@article {GHS25,
    AUTHOR = {T. Goda and R. Hofer and K. Suzuki},
     TITLE = {{Disproving the quasi-uniformity of the Halton sequences and some of Halton-type sequences}},
   JOURNAL = {arXiv:2511.21153},
   URL = {https://arxiv.org/abs/2511.21153},
   YEAR = {2025}
}

@book {KN74,
    AUTHOR = {Kuipers, L. and Niederreiter, H.},
     TITLE = {Uniform Distribution of Sequences},
    SERIES = {Pure and Applied Mathematics},
 PUBLISHER = {Wiley-Interscience, New
              York-London-Sydney},
      YEAR = {1974},
     PAGES = {xiv+390},
}

@article {L88,
    AUTHOR = {Larcher, G.},
     TITLE = {On the distribution of {$s$}-dimensional {K}ronecker-sequences},
   JOURNAL = {Acta Arith.},
  FJOURNAL = {Polska Akademia Nauk. Instytut Matematyczny. Acta Arithmetica},
    VOLUME = {51},
      YEAR = {1988},
    NUMBER = {4},
     PAGES = {335--347},
       DOI = {10.4064/aa-51-4-335-347},
}

@article {MBC79,
    AUTHOR = {McKay, M.~D. and Beckman, R.~J. and Conover, W.~J.},
     TITLE = {A comparison of three methods for selecting values of input
              variables in the analysis of output from a computer code},
   JOURNAL = {Technometrics},
  FJOURNAL = {Technometrics. A Journal of Statistics for the Physical,
              Chemical and Engineering Sciences},
    VOLUME = {21},
      YEAR = {1979},
    NUMBER = {2},
     PAGES = {239--245},
       DOI = {10.2307/1268522},
}

@book {N92,
    AUTHOR = {Niederreiter, H.},
     TITLE = {Random Number Generation and Quasi-{M}onte {C}arlo methods},
    SERIES = {CBMS-NSF Regional Conference Series in Applied Mathematics},
    VOLUME = {63},
 PUBLISHER = {Society for Industrial and Applied Mathematics (SIAM),
              Philadelphia, PA},
      YEAR = {1992},
}

@article {LLL82,
    AUTHOR = {Lenstra, A. K. and Lenstra, Jr., H. W. and Lov\'asz, L.},
     TITLE = {Factoring polynomials with rational coefficients},
   JOURNAL = {Math. Ann.},
  FJOURNAL = {Mathematische Annalen},
    VOLUME = {261},
      YEAR = {1982},
    NUMBER = {4},
     PAGES = {515--534},
       DOI = {10.1007/BF01457454},
}

@article {NS09,
    AUTHOR = {Nguyen, P.~Q. and Stehl\'e, D.},
     TITLE = {An {LLL} algorithm with quadratic complexity},
   JOURNAL = {SIAM J. Comput.},
  FJOURNAL = {SIAM Journal on Computing},
    VOLUME = {39},
      YEAR = {2009},
    NUMBER = {3},
     PAGES = {874--903},
       DOI = {10.1137/070705702},
}

@book {NV10,
     TITLE = {The {LLL} Algorithm},
    SERIES = {Information Security and Cryptography},
    EDITOR = {Nguyen, P.~Q. and Vall\'ee, B.},
 PUBLISHER = {Springer-Verlag, Berlin},
      YEAR = {2010},
     PAGES = {xiv+496},
       DOI = {10.1007/978-3-642-02295-1},
}

@article {HE06,
    AUTHOR = {Hechenleitner, B. and Entacher, K.},
     TITLE = {A parallel search for good lattice points using {LLL}-spectral
              tests},
   JOURNAL = {J. Comput. Appl. Math.},
  FJOURNAL = {Journal of Computational and Applied Mathematics},
    VOLUME = {189},
      YEAR = {2006},
    NUMBER = {1-2},
     PAGES = {424--441},
       DOI = {10.1016/j.cam.2005.03.058},
}

@article {PZ23,
    AUTHOR = {Pronzato, L. and Zhigljavsky, A.},
     TITLE = {Quasi-uniform designs with optimal and near-optimal uniformity
              constant},
   JOURNAL = {J. Approx. Theory},
  FJOURNAL = {Journal of Approximation Theory},
    VOLUME = {294},
      YEAR = {2023},
     PAGES = {Paper No. 105931, 14},
}

@book {S80,
    AUTHOR = {Schmidt, W.~M.},
     TITLE = {Diophantine Approximation},
    SERIES = {Lecture Notes in Mathematics},
    VOLUME = {785},
 PUBLISHER = {Springer, Berlin},
      YEAR = {1980},
     PAGES = {x+299},
}

@book {S89,
    AUTHOR = {Siegel, C.~L.},
     TITLE = {Lectures on the Geometry of Numbers},
 PUBLISHER = {Springer-Verlag, Berlin},
      YEAR = {1989},
     PAGES = {x+160},
       DOI = {10.1007/978-3-662-08287-4},
}

@book {MG02,
    AUTHOR = {Micciancio, D. and Goldwasser, S.},
     TITLE = {Complexity of Lattice Problems},
    SERIES = {The Kluwer International Series in Engineering and Computer
              Science},
    VOLUME = {671},
 PUBLISHER = {Kluwer Academic Publishers, Boston, MA},
      YEAR = {2002},
     PAGES = {x+220},
       DOI = {10.1007/978-1-4615-0897-7},
}

@book {SJ94,
    AUTHOR = {Sloan, I. H. and Joe, S.},
     TITLE = {Lattice Methods for Multiple Integration},
    SERIES = {Oxford Science Publications},
 PUBLISHER = {The Clarendon Press, Oxford University Press, New York},
      YEAR = {1994},
     PAGES = {xii+239},
}

@article {Sch95,
    AUTHOR = {Schaback, R.},
     TITLE = {Error estimates and condition numbers for radial basis function interpolation},
   JOURNAL = {Adv. Comput. Math.},
  FJOURNAL = {Advances in Computational Mathematics},
    VOLUME = {3},
      YEAR = {1995},
    NUMBER = {3},
     PAGES = {251--264},
}

@article {SW06,
    AUTHOR = {Schaback, R. and Wendland, H.},
     TITLE = {Kernel techniques: {F}rom machine learning to meshless methods},
   JOURNAL = {Acta Numer.},
  FJOURNAL = {Acta Numerica},
    VOLUME = {15},
      YEAR = {2006},
     PAGES = {543--639},
}

@book {W05,
    AUTHOR = {Wendland, H.},
     TITLE = {Scattered Data Approximation},
    SERIES = {Cambridge Monographs on Applied and Computational Mathematics},
    VOLUME = {17},
 PUBLISHER = {Cambridge University Press, Cambridge},
      YEAR = {2005},
}

@article {T20,
    AUTHOR = {Teckentrup, A.~L.},
     TITLE = {Convergence of {G}aussian process regression with estimated
              hyper-parameters and applications in {B}ayesian inverse
              problems},
   JOURNAL = {SIAM/ASA J. Uncertain. Quantif.},
  FJOURNAL = {SIAM/ASA Journal on Uncertainty Quantification},
    VOLUME = {8},
      YEAR = {2020},
    NUMBER = {4},
     PAGES = {1310--1337},
}

@article {WSH21,
    AUTHOR = {Wenzel, T. and Santin, G. and Haasdonk, B.},
     TITLE = {A novel class of stabilized greedy kernel approximation algorithms: {C}onvergence, stability and uniform point distribution},
   JOURNAL = {J. Approx. Theory},
  FJOURNAL = {Journal of Approximation Theory},
    VOLUME = {262}, 
      YEAR = {2021},
     PAGES = {Paper No. 105508, 30},
       DOI = {10.1016/j.jat.2020.105508},
}

@article {WBG21,
    AUTHOR = {Wynne, G. and Briol, F.-X. and Girolami,
              M.},
     TITLE = {Convergence guarantees for {G}aussian process means with
              misspecified likelihoods and smoothness},
   JOURNAL = {J. Mach. Learn. Res.},
  FJOURNAL = {Journal of Machine Learning Research (JMLR)},
    VOLUME = {22},
      YEAR = {2021},
     PAGES = {Paper No. 123, 40},
}

@book {SWN03,
    AUTHOR = {Santner, T.~J. and Williams, B.~J. and Notz, W.~I.},
     TITLE = {The Design and Analysis of Computer Experiments},
    SERIES = {Springer Series in Statistics},
 PUBLISHER = {Springer-Verlag, New York},
      YEAR = {2003},
     PAGES = {xii+283},
}

@book {FLS06,
    AUTHOR = {Fang, K.-T. and Li, R. and Sudjianto, A.},
     TITLE = {Design and Modeling for Computer Experiments},
    SERIES = {Chapman \& Hall/CRC Computer Science and Data Analysis Series},
 PUBLISHER = {Chapman \& Hall/CRC, Boca Raton, FL},
      YEAR = {2006},
     PAGES = {xii+290},
}

@article {PM12,
    AUTHOR = {Pronzato, L. and M\"uller, W.~G.},
     TITLE = {Design of computer experiments: space filling and beyond},
   JOURNAL = {Stat. Comput.},
  FJOURNAL = {Statistics and Computing},
    VOLUME = {22},
      YEAR = {2012},
    NUMBER = {3},
     PAGES = {681--701},
}

@article {JMY90,
    AUTHOR = {Johnson, M. E. and Moore, L. M. and Ylvisaker, D.},
     TITLE = {Minimax and maximin distance designs},
   JOURNAL = {J. Statist. Plann. Inference},
  FJOURNAL = {Journal of Statistical Planning and Inference},
    VOLUME = {26},
      YEAR = {1990},
    NUMBER = {2},
     PAGES = {131--148},
       DOI = {10.1016/0378-3758(90)90122-B},
}

@article {P17,
    AUTHOR = {Pronzato, L.},
     TITLE = {Minimax and maximin space-filling designs: some properties and
              methods for construction},
   JOURNAL = {J. SFdS},
  FJOURNAL = {Journal de la SFdS. Journal de la Soc\'iet\'e{} Fran\c caise
              de Statistique},
    VOLUME = {158},
      YEAR = {2017},
    NUMBER = {1},
     PAGES = {7--36},
}

@article {HRDH11,
    AUTHOR = {Husslage, B. G. M. and Rennen, G. and van Dam, E. R. and Den Hertog, D.},
     TITLE = {Space-filling {L}atin hypercube designs for computer experiments},
   JOURNAL = {Optim. Eng.},
  FJOURNAL = {Optimization and Engineering },
    VOLUME = {12},
      YEAR = {2011},
    NUMBER = {1},
     PAGES = {610--633},
}

@article {J16,
    AUTHOR = {Joseph, V.~R.},
     TITLE = {{Space-filling designs for computer experiments: A review}},
   JOURNAL = {Qual. Eng.},
  FJOURNAL = {Quality Engineering},
    VOLUME = {28},
      YEAR = {2016},
    NUMBER = {1},
     PAGES = {28--35},
       DOI = {10.1080/08982112.2015.1100447},
}

@article {LE88,
    AUTHOR = {L'Ecuyer, P.},
     TITLE = {Efficient and portable combined random number generators},
   JOURNAL = {Comm. ACM},
  FJOURNAL = {Communications of the Association for Computing Machinery},
    VOLUME = {31},
      YEAR = {1988},
    NUMBER = {6},
     PAGES = {742--749, 774},
       DOI = {10.1145/62959.62969},
}

@article {LE99,
    AUTHOR = {L'Ecuyer, P.},
     TITLE = {Tables of linear congruential generators of different sizes
              and good lattice structure},
   JOURNAL = {Math. Comp.},
  FJOURNAL = {Mathematics of Computation},
    VOLUME = {68},
      YEAR = {1999},
    NUMBER = {225},
     PAGES = {249--260},
       DOI = {10.1090/S0025-5718-99-00996-5},
}

@article {F82,
    AUTHOR = {Franke, R.},
     TITLE = {Smooth interpolation of scattered data by local thin plate
              splines},
   JOURNAL = {Comput. Math. Appl.},
  FJOURNAL = {Computers \& Mathematics with Applications. An International
              Journal},
    VOLUME = {8},
      YEAR = {1982},
    NUMBER = {4},
     PAGES = {273--281},
       DOI = {10.1016/0898-1221(82)90009-8},
}

@inproceedings{G84,
author = {Genz, A.},
title = {Testing multidimensional integration routines},
year = {1984},
publisher = {Elsevier North-Holland, Inc.},
address = {USA},
booktitle = {Proc. of International Conference on Tools, Methods and Languages for Scientific and Engineering Computation},
pages = {81–94},
numpages = {14},
location = {Paris, France}
}

@book {F07,
    AUTHOR = {Fasshauer, G.~E.},
     TITLE = {Meshfree Approximation Methods with {MATLAB}},
    SERIES = {Interdisciplinary Mathematical Sciences},
    VOLUME = {6},
 PUBLISHER = {World Scientific Publishing Co. Pte. Ltd., Hackensack, NJ},
      YEAR = {2007},
     PAGES = {xviii+500},
       DOI = {10.1142/6437},
}

@article{KIWINTSG25,
     TITLE = {Space-filling {L}atin hypercube design for efficient {B}ayesian optimization With application to semiconductor development},
    AUTHOR = {Kinoshita, S. and Inoue, Y. and Watanabe, T. and Ikeda, K. and Nishio, S. and Teruya, A. and Sakai, N. and Goda, T.},
   JOURNAL = {IEEE Trans. Semicond. Manuf.},
  FJOURNAL = {IEEE Transactions on Semiconductor Manufacturing},
    VOLUME = {38},
      YEAR = {2025},
    NUMBER = {3},
     PAGES = {446--452},
       DOI = {10.1109/TSM.2025.3574791},
}

@article {B88,
    AUTHOR = {Beck, J.},
     TITLE = {On the discrepancy of convex plane sets},
   JOURNAL = {Monatsh. Math.},
  FJOURNAL = {Monatshefte f\"ur Mathematik},
    VOLUME = {105},
      YEAR = {1988},
    NUMBER = {2},
     PAGES = {91--106},
       DOI = {10.1007/BF01501162},
}

@article {S77,
    AUTHOR = {Stute, Winfried},
     TITLE = {Convergence rates for the isotrope discrepancy},
   JOURNAL = {Ann. Probability},
  FJOURNAL = {The Annals of Probability},
    VOLUME = {5},
      YEAR = {1977},
    NUMBER = {5},
     PAGES = {707--723},
       DOI = {10.1214/aop/1176995714},
}

@article {S75,
    AUTHOR = {Schmidt, W.~M.},
     TITLE = {Irregularities of distribution. {IX}},
   JOURNAL = {Acta Arith.},
  FJOURNAL = {Acta Arithmetica},
    VOLUME = {27},
      YEAR = {1975},
     PAGES = {385--396},
       DOI = {10.4064/aa-27-1-385-396},
}

@article{SHCRRJ26, 
    author = {Sorokin, A.~G. and Hickernell, F.~J. and Choi, S.-C.~T. and Rathinavel, J. and Robbe, P. and Jain, A.}, 
     title = {{QMCPy}: {A} {P}ython framework for ({Q}uasi-){M}onte {C}arlo algorithms}, 
   journal = {J. Open Source Softw.},
  fjournal = {Journal of Open Source Software},
      year = {2026}, 
    volume = {11}, 
    number = {117}, 
     pages = {9705}, 
       doi = {10.21105/joss.09705}, 
}

@article {LF03,
    AUTHOR = {Larsson, E. and Fornberg, B.},
     TITLE = {A numerical study of some radial basis function based solution
              methods for elliptic {PDE}s},
   JOURNAL = {Comput. Math. Appl.},
  FJOURNAL = {Computers \& Mathematics with Applications. An International
              Journal},
    VOLUME = {46},
      YEAR = {2003},
    NUMBER = {5-6},
     PAGES = {891--902},
       DOI = {10.1016/S0898-1221(03)90151-9},
}

@book {RW06,
    AUTHOR = {Rasmussen, C.~E. and Williams, C.~K.~I.},
     TITLE = {Gaussian Processes for Machine Learning},
    SERIES = {Adaptive Computation and Machine Learning},
 PUBLISHER = {MIT Press, Cambridge, MA},
      YEAR = {2006},
     PAGES = {xviii+248},
}

\end{document}